\documentclass[11pt]{article}

\usepackage[utf8]{inputenc}

\usepackage[margin=1in]{geometry}

\usepackage{graphicx}
\usepackage{xcolor}
\usepackage{float}
\usepackage{booktabs}

\usepackage{amsmath}
\usepackage{amsthm}
\usepackage{bm}
\usepackage{newtxtext,newtxmath}   
\renewcommand{\epsilon}{\varepsilon}  

\usepackage[open,openlevel=1]{bookmark}

\usepackage{natbib}
\usepackage{authblk}
\usepackage{hyperref}

\theoremstyle{plain}
\newtheorem{theorem}{Theorem}
\newtheorem{lemma}{Lemma}

\newtheorem{proposition}{Proposition}

\newtheorem*{citedtheorem}{Theorem}  
\newtheorem*{theoremA}{Theorem 1}
\newtheorem*{theoremB}{Theorem 2}
\newtheorem*{propositionA}{Proposition 1}

\theoremstyle{definition}
\newtheorem{definition}{Definition}

\theoremstyle{remark}
\newtheorem{remark}{Remark}

\title{Differential Privacy Guarantees in Small Area Estimation}

\author[1,4]{Soumojit Das\thanks{Corresponding author. This work started when the first author was a doctoral student at the University of Maryland.}}
\author[2,3,4]{Jörg Drechsler}

\affil[1]{Washington State University, Pullman, WA, USA. \href{mailto:soumojit.das@wsu.edu}{soumojit.das@wsu.edu}}
\affil[2]{Institute for Employment Research, Germany.}
\affil[3]{Ludwig-Maximilians-Universität München, Germany.}
\affil[4]{Joint Program in Survey Methodology, University of Maryland, USA.}

\date{}

\begin{document}

\maketitle

\begin{abstract}
Statistical agencies increasingly rely on small area estimation to produce reliable estimates for subpopulations with limited sample sizes. These estimates are built from individual survey responses, so agencies must ensure that releasing them does not reveal information about any single respondent. We show that when a single draw from the posterior distribution of the Bayesian Fay-Herriot model is released, pure $\varepsilon$-differential privacy is unattainable, but the release satisfies formal privacy guarantees under R\'enyi differential privacy and zero-concentrated differential privacy without any noise being added, provided we treat the variance components as fixed. The key insight is that the posterior draw equals the posterior mean plus the Gaussian noise whose variance equals the posterior variance. The guarantee is thus governed by the sensitivity of the direct survey estimate and the posterior variance, and applies equally to a release of the posterior mean with that amount of noise added. For binary outcomes estimated with the H\'ajek estimator, the sensitivity equals the largest survey weight in the area divided by the sum of the weights. For the intercept-only model we derive exact coefficients describing how a change in one record propagates to every area's posterior mean, giving finite-sample per-area guarantees and a joint guarantee for releasing all areas at once that exceeds the largest per-area guarantee by at most a few percent in our applications. Two applications, poverty prevalence across 2,462 Public Use Microdata Areas in the American Community Survey and smoking prevalence across 52 substrata in the Washington state Behavioral Risk Factor Surveillance System, show that the guarantee is driven far more by the inequality of the survey weights than by the sample size, and that the shrinkage of the model tightens it substantially.
\end{abstract}

\noindent\textbf{Key words:} Fay--Herriot model; zero-concentrated differential privacy; statistical disclosure limitation; complex survey design; posterior sampling; Gaussian mechanism.

\section*{Statement of Significance}
Government statistical agencies routinely publish estimates for small geographic areas, such as local poverty or smoking rates. Because these estimates are built from individual survey responses, agencies must ensure that publishing them does not reveal information about any single respondent. Establishing such protection usually means adding random noise to the numbers, which lowers their accuracy and requires difficult case-by-case review before release.

This paper shows that a popular statistical method for producing small-area estimates already provides a quantifiable amount of this protection without adding any external noise. The protection comes from the randomness inherent in the method when its estimates are released as random draws rather than fixed numbers. We derive a formula for how strong the protection is and show that it depends mainly on how uneven a survey's sampling weights are. We also show that, for a widely used instantiation of the method, publishing the estimates for all areas at once provides essentially the same protection as publishing the single most exposed area alone.

Applying the approach to national surveys of poverty and smoking, we give agencies a transparent way to measure and report the privacy protection their small-area estimates already carry, and to identify design choices that would strengthen it.

\section{Introduction}
With the ever increasing amounts of data being collected and shared and increasing computing powers that can be used for sophisticated reidentification attacks, statistical agencies around the world are getting more and more concerned about ensuring sufficient protection of the confidentiality of their survey respondents when disseminating data to the public. As research conducted at the U.S. Census Bureau using data from the 2010 Decennial Census has demonstrated \citep{abowd20232010}, traditional protection strategies such as data swapping, which exchanges the values of some of the units in the data, no longer offer a sufficient level of protection.

One of the biggest challenges statistical agencies are facing when assessing the risk of their data release strategies is the fact that data are published in various forms using different channels making it difficult to track the combined risks of these data releases. Most data are released in the form of tables containing aggregate statistics. Beyond the tables, highly redacted micro datasets are often disseminated as public use files. Finally, researchers can often analyze less protected versions of the microdata at research data centers hosted by the statistical agencies. Results based on these analyses can only leave the research data center (for example, to be used in publications) if the output has been cleared by staff from the agency. The risk assessment for this dissemination channel is especially challenging, as the staff needs to consider all previous data releases to assess the incremental risk from publishing the research output. With traditional statistical disclosure limitation approaches a formal assessment of the accumulated risk of all these data releases is infeasible and statistical agencies rely on simple heuristics such as requiring that each output needs to be based on a minimum number of records. As the database reconstruction theorem \citep{dinur2003revealing} mathematically proves, such simple heuristics are not adequate to ensure low risks of disclosure. The theorem essentially states that it is  mathematically impossible to release too many statistics too accurately without risking that the underlying microdata could be fully reconstructed from the aggregate statistics.

One possible strategy to address this dilemma is to rely on the concept of differential privacy \citep{dwork2006calibrating} developed in Computer Science to ensure rigorous protection standards. Differential privacy (DP) offers formal, mathematically quantifiable privacy guarantees by bounding the effect that any single record in the data can have on the observed output.
Beyond the strong levels of protection, DP offers two important properties: immunity to postprocessing and (sequential) composition. Immunity to postprocessing implies that any function of a differentially private result also satisfies DP. Sequential composition allows quantifying the overall privacy loss over multiple data releases. DP guarantees that the overall privacy loss as measured by the privacy parameters is bounded by the sum of the individual losses from each release (see Section \ref{sec:DP_properties} for further details). This property helps to address the dilemma discussed above. If all previous data releases satisfy DP, it will be easy for the staff working at the research data center to assess the additional privacy leakage from the research output the analyst would like to use outside the research data center (assuming the output also satisfies DP).

However, implementing DP in practice can be challenging. Since DP bounds the risk under worst case assumptions, large amounts of noise often need to be added to achieve meaningful privacy guarantees under DP especially for statistics that are sensitive to small changes in the underlying data. For example, \cite{barrientos2024feasibility} find that none of the existing methods to achieve DP for linear regression models achieve utility levels that would be accepted by applied researchers based on meaningful levels of the privacy parameters. The situation is even more difficult in the survey context as the impacts of sampling, weighting, linkage, and imputation all need to be taken into account when trying to establish formal privacy guarantees \citep{bun2022controlling,Das2022,drechsler2023differential,lin2024differentiallyci,drechsler2024complexities, lin2024differentially}.

For these reasons there is growing interest in recent years in estimation techniques that inherently provide privacy guarantees because they have some randomness baked into the estimation process, which can be linked to the formal guarantees of differential privacy. Specifically, starting with \cite{mir2013differential}, several papers have investigated which formal privacy guarantees can be established for Bayesian inference obtained by sampling from the posterior distribution \citep{wang2015privacy,minami2016differential,zhang2016differential,jewson2023differentially,zhang2023dp}.

Our paper adds to this discussion by investigating which formal guarantees Bayesian small area estimation strategies can provide. The term small area estimation (SAE) subsumes a class of estimation techniques that aim to improve inference for sparsely populated subgroups of the data by pooling the information across subgroups through hierarchical models with the goal of reducing the uncertainty in the obtained estimates. The U.S. Census Bureau and its various research data centers release model-based estimates for small areas which are subject to disclosure review \citep{fsrdc_handbook}, and yet to the best of our knowledge, no formal quantification of the disclosure risk of these outputs is available. Being able to establish formal guarantees for small area models would give such reviews a principled basis. We aim to achieve this for one specific SAE model: the Bayesian version of the classical and still widely popular Fay-Herriot Model \citep{fay1979estimates}.

The remainder of the paper is organized as follows. Section~\ref{sec:background} reviews the necessary background on differential privacy, Bayesian posterior sampling, and small area estimation. Section~\ref{sec:privacy-guarantees} presents the theoretical results: a general privacy guarantee for an arbitrary design matrix (Theorem~\ref{thm:rdp}), exact finite-$m$ guarantees for the intercept-only model under area-specific variances (Proposition~\ref{prop:intercept-varying}), and an analysis of how the bound depends on survey design parameters.  Section~\ref{sec:applications} illustrates the framework empirically using American Community Survey microdata and Behavioral Risk Factor Surveillance System data. Section~\ref{sec:discussion} discusses limitations and broader applicability. Proofs and additional results are provided in the supplement.

\section{Background}\label{sec:background}

\subsection{Differential Privacy}\label{sec:DP_prelim}

Differential privacy (DP) provides a mathematically rigorous framework for quantifying privacy guarantees by bounding the effect that any single record can have on the observed output~\citep{dwork2006calibrating}. Several variants of DP have been proposed in the literature in recent years (\citet{desfontaines2019sok} identifies more than 200). We will only review the variants relevant for our paper: $\varepsilon$-DP, $(\varepsilon,\delta)$-DP \citep{dwork2006calibrating}, R\'enyi-DP \citep{mironov2017renyi}, and $\rho$-Zero-concentrated-DP \citep{bun2016concentrated}.

\begin{definition}[Neighboring Databases]
Two databases $D$ and $D'$ are considered neighbors (denoted $D \sim D'$) if they differ in exactly one record.
\end{definition}
Two types of neighboring datasets are distinguished in the DP literature. Under unbounded DP two datasets are called neighbors if $D$ can be obtained from $D'$ by adding or removing a single record. Under bounded DP the size of the data is fixed and $D$ can be obtained from $D'$ by changing the values of a single record in the database. We only consider the bounded case in this paper as it allows the survey weights to be treated as fixed between neighboring datasets under certain circumstances. This helps to substantially reduce the amount of uncertainty that is required in the obtained estimates to satisfy differential privacy. See \citet{drechsler2023differential} and \citet{drechsler2024complexities} for further discussions of the complexities that arise when considering DP in the survey context.

\subsubsection{DP Definitions}
\begin{definition}[Pure $\epsilon$-Differential Privacy \citep{dwork2006calibrating}]
A randomized algorithm $f: \mathcal{D} \mapsto \mathcal{R}$ satisfies $\epsilon$-differential privacy if, for any neighboring databases $D, D' \in \mathcal{D}$ and any measurable set $S \subseteq \mathcal{R}$:
\begin{equation}
\Pr[f(D) \in S] \leq e^{\epsilon} \Pr[f(D') \in S].
\end{equation}
\end{definition}

The parameter $\varepsilon$ is called the privacy loss parameter. Larger values of $\varepsilon$ allow for larger differences in the output distribution if a single record is changed in the input database implying weaker levels of protection.

\begin{definition}[Approximate $(\epsilon, \delta)$-Differential Privacy \citep{dwork2006calibrating}]
A randomized algorithm $f: \mathcal{D} \mapsto \mathcal{R}$ satisfies $(\epsilon, \delta)$-differential privacy if, for any neighboring databases $D, D' \in \mathcal{D}$ and any measurable set $S \subseteq \mathcal{R}$:
\begin{equation}
\Pr[f(D) \in S] \leq e^{\epsilon} \Pr[f(D') \in S] + \delta.
\end{equation}
\end{definition}

\begin{definition}[R\'enyi Divergence]
For two probability distributions $P$ and $Q$, the R\'enyi divergence of order $\alpha > 1$ is:
\begin{equation}
D_{\alpha}(P \|Q) = \frac{1}{\alpha - 1} \ln\int_x P(x)^{\alpha} Q(x)^{1-\alpha} dx = \frac{1}{\alpha - 1} \ln E_Q\left[\left(\frac{P(x)}{Q(x)}\right)^{\alpha}\right]
\end{equation}
\end{definition}

\begin{definition}[$(\alpha, \epsilon)$-R\'enyi Differential Privacy \citep{mironov2017renyi}]
A randomized mechanism $f: \mathcal{D} \mapsto \mathcal{R}$ satisfies $(\alpha, \epsilon)$-R\'enyi differential privacy (RDP) if for any neighboring databases $D, D' \in \mathcal{D}$:
\begin{equation}
D_{\alpha}(f(D) \| f(D')) \leq \epsilon.
\end{equation}
\end{definition}

R\'enyi differential privacy provides a relaxation of pure differential privacy that often yields tighter composition bounds, that is, tighter bounds regarding how the privacy loss accumulates over multiple data releases. The privacy parameters of R\'enyi DP can be converted to $(\epsilon, \delta)$-DP guarantees~\citep{mironov2017renyi}, which can be helpful when trying to interpret the privacy guarantees. Specifically, $(\alpha, \epsilon)$-RDP implies $(\epsilon + \frac{\log(1/\delta)}{\alpha-1}, \delta)$-DP for any $0 < \delta < 1$.

\begin{definition}[$\rho$-Zero Concentrated Differential Privacy \citep{bun2016concentrated}]
A randomized mechanism $f: \mathcal{D} \mapsto \mathcal{R}$ satisfies $\rho$-zero concentrated differential privacy ($\rho$-zCDP) if for all $\alpha > 1$ and adjacent databases $D, D' \in \mathcal{D}$:
\begin{equation}
D_{\alpha}(f(D) \| f(D')) \leq \rho \alpha.
\end{equation}
\end{definition}

The $\rho$-zCDP definition is closely related to R\'enyi-DP but requires only one parameter, often yielding slightly stronger results and simpler analysis~\citep{bun2016concentrated}. The $\rho$-zCDP guarantee can be converted to $(\varepsilon, \delta)$-DP using the closed-form expression~\citep{bun2016concentrated}:
\begin{equation}\label{eq:zcdp-conversion}
\varepsilon = \rho + 2\sqrt{\rho \log(1/\delta)}
\end{equation}
for any $0 < \delta < 1$. Since $\delta$ is a free parameter of this conversion, we report $\rho$ as the primary privacy measure throughout the paper and give $\varepsilon$ at a stated $\delta$ only as a secondary summary.

\subsubsection{The Gaussian Mechanism}

\begin{definition}[Global $\ell_2$-Sensitivity]
For a function $f: \mathcal{D} \to \mathbb{R}^k$, the global $\ell_2$-sensitivity is:
\begin{equation}
\Delta_2^f = \max_{D \sim D'} \|f(D) - f(D')\|_2
\end{equation}
where the maximum is over all pairs of neighboring databases.
\end{definition}

\begin{definition}[Gaussian Mechanism \citep{dwork2014algorithmic}]
For a function $f: \mathcal{D} \to \mathbb{R}^k$ with $\ell_2$-sensitivity $\Delta_2^f$, the Gaussian mechanism releases:
\begin{equation}
\mathcal{M}(D) = f(D) + Z, \quad Z \sim N(0, \sigma^2 I_k)
\end{equation}
where $\sigma^2 = (\Delta_2^f)^2 / (2\rho)$ for $\rho$-zCDP, or equivalently $\sigma^2 = 2\log(1.25/\delta) \cdot (\Delta_2^f)^2 / \varepsilon^2$ for $(\varepsilon, \delta)$-DP.
\end{definition}

The Gaussian mechanism satisfies $(\alpha, \alpha(\Delta_2^f)^2/(2\sigma^2))$-R\'enyi DP for every $\alpha > 1$ \citep{mironov2017renyi} and, since this bound is linear in $\alpha$, $\rho$-zCDP with $\rho = (\Delta_2^f)^2 / (2\sigma^2)$ \citep{bun2016concentrated}.
This relationship will be central to our analysis: in Section~\ref{sec:privacy-guarantees}, we show that the Bayesian Fay-Herriot posterior draw is equivalent to a Gaussian mechanism applied to the posterior mean, with the posterior variance playing the role of $\sigma^2$.

\subsubsection{Key Properties of Differential Privacy}\label{sec:DP_properties}

All variants of differential privacy discussed in the previous section offer several attractive properties that help monitor the privacy loss in practical settings. We only review the two properties that will be relevant for the remainder of this paper and present them in the context of $\varepsilon$-DP. Slightly modified versions of these properties can also be specified for the various DP relaxations discussed above.

\noindent \textbf{Post-processing Immunity}: If a mechanism $f$ satisfies $\epsilon$-DP, then for any function $g$, the composed mechanism $g \circ f$ also satisfies $\epsilon$-DP.

\noindent \textbf{Sequential Composition}: For mechanisms $f_1, \ldots, f_k$ satisfying $\epsilon_1$-DP, \ldots, $\epsilon_k$-DP respectively, their overall privacy loss is bounded by $(\sum_{i=1}^k \epsilon_i)$-DP.

\subsection{Differential Privacy and Posterior Sampling}\label{sec:OPS}

A key ingredient of DP is the privacy loss random variable (PLRV). For a randomized mechanism $f$ and a specific output $\theta$, the PLRV is defined as the log-likelihood ratio:

$$\mathcal{L}(\theta) = \log \frac{f_D(\theta)}{f_{D'}(\theta)}.$$
Pure $\varepsilon$-DP requires that this quantity is bounded by $\varepsilon$ for all possible outputs.

As pure $\varepsilon$-DP requires a strict upper bound for this ratio, no deterministic mechanism/analysis model (including the classical Fay-Herriot model) can satisfy $\varepsilon$-DP (unless changing an arbitrary record does not change the output of the mechanism/model).

However, in Bayesian Statistics inferences are commonly based on random draws from the posterior distribution of the parameters given the data. This implies that the outcome of interest is a random variable and thus there is a possibility that the PLRV can be bounded. Thus, the natural question arises under which conditions these posterior draws satisfy the requirements of DP and thus might offer privacy ``for free''.

The most general result is given by \citet{wang2015privacy} based on earlier work by \citet{dimitrakakis2014robust}. Denoting $\ell(X \mid \theta)$ to be the likelihood, the authors show that formal privacy guarantees can always be achieved as long as the likelihood is bounded:
\begin{citedtheorem}[\citet{wang2015privacy}]
    If $\mathrm{sup}_{X \in \mathcal{X}, \theta \in \Theta} |\log(\ell(X \mid \theta))| \leq B$ then one sample from $\theta \mid X$ with any prior preserves $(4B, 0)$-DP.
\end{citedtheorem}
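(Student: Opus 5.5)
We prove the cited result of \citet{wang2015privacy} by bounding the density ratio of the two posteriors directly. Fix neighboring databases $D \sim D'$. In the bounded setting they differ in exactly one record, say $X_i$ in $D$ and $X_i'$ in $D'$, and all other records are shared. We assume the records are conditionally independent given $\theta$, as is implicit in the statement where $\ell(X\mid\theta)$ is the per-record likelihood. Writing $\pi$ for an arbitrary prior, the posterior densities are
\begin{equation*}
p(\theta\mid D) = \frac{\pi(\theta)\prod_{j}\ell(X_j\mid\theta)}{\int \pi(\vartheta)\prod_j \ell(X_j\mid\vartheta)\,d\vartheta},
\end{equation*}
and $p(\theta\mid D')$ is defined in the same way. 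The approach is to show that the PLRV $\mathcal{L}(\theta)=\log p(\theta\mid D)-\log p(\theta\mid D')$ satisfies $|\mathcal{L}(\theta)|\le 4B$ for every $\theta$. Integrating the pointwise bound over any measurable $S$ then gives $\Pr[\theta\in S\mid D]\le e^{4B}\Pr[\theta\in S\mid D']$, which is pure $(4B,0)$-DP.

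\emph{Numerator.} Since the prior and the common factors $\prod_{j\neq i}\ell(X_j\mid\theta)$ cancel, the ratio of unnormalized posteriors is $\ell(X_i\mid\theta)/\ell(X_i'\mid\theta)$. Its logarithm is $\log\ell(X_i\mid\theta)-\log\ell(X_i'\mid\theta)$. The hypothesis $|\log\ell|\le B$ applies to each term, so this difference lies in $[-2B,2B]$.

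\emph{Normalizing constants.} Let $Z_D$ and $Z_{D'}$ denote the two denominators, and write $g(\vartheta)=\pi(\vartheta)\prod_{j\ne i}\ell(X_j\mid\vartheta)$. Then $Z_D=\int g(\vartheta)\,\ell(X_i\mid\vartheta)\,d\vartheta$ and $Z_{D'}=\int g(\vartheta)\,\ell(X_i'\mid\vartheta)\,d\vartheta$. The pointwise bound from the previous step gives $\ell(X_i\mid\vartheta)\le e^{2B}\ell(X_i'\mid\vartheta)$ for every $\vartheta$. Because $g\ge 0$, integrating against $g$ preserves this inequality, so $Z_D\le e^{2B}Z_{D'}$; by symmetry $Z_{D'}\le e^{2B}Z_D$. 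Hence $|\log(Z_{D'}/Z_D)|\le 2B$.

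\emph{Combining.} Adding the two contributions gives $|\mathcal{L}(\theta)|\le 2B+2B=4B$ for all $\theta$, and the integration over $S$ described above completes the argument. The only step requiring care is the normalizing-constant bound, which is what doubles the constant from $2B$ to $4B$. It works for an arbitrary, possibly improper, prior provided the posteriors are proper, i.e.\ $0<Z_D,Z_{D'}<\infty$; this follows from the two-sided bound on $\ell$ whenever $\int g<\infty$. A secondary point is the independence assumption, which is what makes the prior and the shared records cancel exactly. A remark afterwards should note that for the Gaussian Fay--Herriot likelihood $\log\ell$ is unbounded, so this theorem does not apply there, which motivates the later R\'enyi/zCDP analysis.
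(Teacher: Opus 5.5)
The paper does not prove this statement; it quotes it from \citet{wang2015privacy}. Your argument is the standard one from that source: the ratio of unnormalized posteriors contributes at most $e^{2B}$ and the ratio of normalizing constants another $e^{2B}$, so your proof is correct. One small correction to your aside on priors: $|\log\ell|\le B$ forces $\ell\ge e^{-B}$, hence $Z_D\ge e^{-nB}\int\pi$ for $n$ records, so an improper prior always yields an improper posterior; your ``possibly improper'' caveat therefore never applies, and the result is really a statement about proper priors.
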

Unfortunately, the likelihood of the Fay-Herriot model is unbounded and thus these results are not helpful for our endeavor. \citet{minami2016differential} derived results for the more general case of the Gibbs posterior (of which the Bayesian posterior is a special case) that do not require bounds for the loss function to achieve $(\varepsilon,\delta)$-DP. The remaining literature either focuses on results for specific analysis tasks or on modifying the prior to ensure that draws from the posterior satisfy some formal privacy guarantees. The idea of modifying the prior to achieve (approximate)-DP for draws from the posterior was adopted in various papers \citep{dimitrakakis2017differential,zhang2016differential} and most extensively in the thesis by \citet{Zhengdifferential}.
Since all these strategies modify the posterior sampling approach in some way, none of them is suitable for our goal of measuring the inherent privacy guarantees of the Fay-Herriot model.

\subsection{Small Area Estimation}\label{sec:SAE}

Small area estimation (SAE) addresses the challenge of producing reliable estimates for subpopulations with small sample sizes by borrowing strength across areas through hierarchical modeling~\citep{rao2015small}. SAE methods can be broadly classified into two modeling strategies:

\begin{enumerate}
\item \textbf{Area-level models} use auxiliary information only on the area-level, for example, average age or unemployment rate at the area level \citep{fay1979estimates, das2025multidimensional}.
\item \textbf{Unit-level models} utilize individual-level auxiliary information \citep{battese1988error, das2025approximate}.
\end{enumerate}

Our focus is on area-level models, which have the practical advantage of requiring only aggregate auxiliary information.

\subsubsection{The Fay-Herriot Model}
\label{sec:Fay-Herriot}
The Fay-Herriot model~\citep{fay1979estimates} remains one of the most widely used area-level SAE models. Let $\theta_i$ be the population parameter of interest for area $i$, $i = 1, \ldots, m$, and let $y_i$ be the direct survey estimate of $\theta_i$ with known sampling variance $\sigma_{y_i}^2$.
The model is specified hierarchically as:

 \textbf{Level 1 (Sampling model)}:
\begin{equation}
y_i \mid \theta_i \stackrel{ind}{\sim} N(\theta_i, \sigma_{y_i}^2)
\end{equation}

\textbf{Level 2 (Linking model)}:
\begin{equation}
\theta_i \stackrel{ind}{\sim} N(x_i'\beta, \sigma^2_v)
\end{equation} where $x_i$ is a vector of known auxiliary variables for area $i$, $\beta$ is a vector of unknown regression coefficients, and $\sigma^2_v$ is an unknown model variance component.

In most practical applications, the direct estimate $y_i$ is a survey-weighted estimator for the population mean in area $i$. Two types of estimators are commonly used in practice: The Horvitz-Thompson (HT) estimator and the H\'ajek estimator \citep{lohr2021sampling}. The HT estimator is given as
\begin{equation}\label{eq:HT}
\hat{y}_i^{HT} = \frac{\sum_{r=1}^{n_i} w_{ir}\,Y_{ir}}{N_i}
\end{equation}
where $r$ indexes the $n_i$ sampled units in area $i$, $w_{ir}$ is the survey weight for unit $r$, $Y_{ir}$ is the outcome value and $N_i$ is the total number of units in area $i$ in the population.
For the H\'ajek estimator, the denominator $N_i$ in Eq. \ref{eq:HT} is replaced with its estimate $\sum_{r=1}^{n_i} w_{ir}$:
\begin{equation}\label{eq:hajek}
\hat{y}_i^{HJ} = \frac{\sum_{r=1}^{n_i} w_{ir}\,Y_{ir}}{\sum_{r=1}^{n_i} w_{ir}}
\end{equation}

Unless otherwise noted, we will focus on the H\'{a}jek estimator in the remainder of this paper since this estimator is more commonly used in practice (see \citet{lohr2021sampling} for further discussion on this topic).

While the Fay-Herriot model assumes normality for the direct survey estimates $y_i$, it has been successfully applied with binary outcomes in numerous small area estimation contexts (\cite{liu2007hierarchical}; \cite{esteban2012small}; \cite{benavent2016multivariate}; \cite{krenzke2020hierarchical}). This matters because the sensitivity, and with it the privacy guarantee, scales with the range of the outcome (Theorem~\ref{thm:rdp} and section~\ref{sec:analytical}), which equals one for a binary outcome.

\subsubsection{Estimation under the Fay-Herriot Model}

The Best Predictor (BP) of $\theta_i$ when parameters are known is:
\begin{equation}
\theta_i^{BP} = (1 - B_i)y_i + B_i x_i^T\beta
\end{equation} where $B_i = \sigma_{y_i}^2/(\sigma_{y_i}^2 + \sigma^2_v)$ is the shrinkage factor.

The Best Linear Unbiased Predictor (BLUP) is obtained by replacing $\beta$ with its weighted least squares (WLS) estimate:
\begin{equation}
\theta_i^{BLUP} = (1 - B_i)y_i + B_i x_i^T\hat{\beta}_{WLS}
\end{equation} where
\begin{equation}
\hat{\beta}_{WLS} = \left[\sum_{j=1}^{m} (1-B_j)x_j x_j^T \right]^{-1} \sum_{j=1}^{m} (1-B_j)x_j y_j
\end{equation}

\subsubsection{Bayesian Fay-Herriot Model}

Under the Bayesian paradigm, we place priors on the unknown parameters. For our analysis, we assume a flat prior on $\beta$: $\pi(\beta) \propto 1$. We also assume that the sampling variance $\sigma_{y_i}^2$ and model variance $\sigma_v^2$ are known to simplify the derivations presented in the next section. We come back to this point in Section~\ref{sec:robustness}.
The posterior distribution of $\theta_i$, which forms the basis of our privacy analysis, is:
\begin{equation}
\label{eqn: posterior}
    \theta_i \mid D, \sigma_v^2 \sim N\left(\frac{\sigma_v^2 y_i + \sigma_{y_i}^2 x_i^T\hat{\beta}_{WLS}}{\sigma_v^2 + \sigma_{y_i}^2}, \frac{\sigma_{y_i}^2 \sigma_v^2}{\sigma_{y_i}^2 + \sigma_v^2} + B_i^2 \sigma_v^2 \cdot x_i^T\left[\sum_{j=1}^{m} (1-B_j)x_j x_j^T \right]^{-1} x_i\right).
\end{equation}
Under the flat prior assumption, the posterior mean has a closed-form expression identical to the empirical best linear unbiased predictor (the BLUP (defined above) with $\sigma_v^2$ replaced by its estimate $\hat{\sigma}_v^2$). However, as discussed above, deterministic outputs cannot satisfy differential privacy without additional noise injection, as the privacy loss random variable would be unbounded. Therefore, building on the one posterior sampling literature of Section~\ref{sec:OPS}, our analysis instead assumes release of a single posterior draw $\tilde{\theta}_i \sim N(\mu_i^D, \sigma_i^2)$ with
\[
\mu_i^D=\frac{\sigma_v^2 y_i + \sigma_{y_i}^2 x_i^T\hat{\beta}_{WLS}}{\sigma_v^2 + \sigma_{y_i}^2}\quad\text{and}\quad \sigma_i^2=\frac{\sigma_{y_i}^2 \sigma_v^2}{\sigma_{y_i}^2 + \sigma_v^2} + B_i^2 \sigma_v^2 \cdot x_i^T\left[\sum_{j=1}^{m} (1-B_j)x_j x_j^T \right]^{-1} x_i.
\]

For our theoretical development, we simplify by assuming equal sampling variances across areas: $\sigma_{y_i}^{2} = \sigma_y^2$ for all $i$.
\begin{equation}\label{eq:sigma_i}
\sigma_i^2 = \frac{\sigma_y^2 \sigma_v^2}{\sigma_y^2 + \sigma_v^2} + B^2 \sigma_v^2 \cdot x_i^T\left[\sum_{j=1}^{m} (1-B)x_j x_j^T \right]^{-1} x_i
\end{equation}
and $B = \sigma_y^2/(\sigma_y^2 + \sigma_v^2)$. Note that the posterior variance $\sigma_i^2$ is area-specific through its dependence on $x_i$.

The assumption of constant variances may not always be justified. In Section~\ref{sec:intercept-exact}, we present results that do not require this assumption for the special case of the intercept-only model, which we also use in our applications in Section~\ref{sec:applications}.
Extending the formal guarantees to a general design matrix with varying variances is left for future work.

\section{Privacy Guarantees}\label{sec:privacy-guarantees}

We require the following assumptions to establish our results:
\begin{enumerate}
    \item The variance components $\sigma^2_y$ and $\sigma^2_v$ are known (or can be treated as fixed).
    \item The area-level auxiliary information $x_i$ can be considered public knowledge.
\end{enumerate}

The second assumption is relatively mild, as $x_i$ typically comes from external data sources such as publicly released Census tables. The first assumption is more restrictive but standard in the SAE literature. From an inferential perspective, the effect of using plug-in estimators is typically small. However, in the DP context, estimating parameters from the data without accounting for their privacy leakage would invalidate formal guarantees. We will come back to this issue in Section~\ref{sec:robustness}.

\subsection{Setup and Mechanism Description}

Consider two neighboring databases $D$ and $D'$ that differ only in one record in one area. Without loss of generality, we assume the two datasets differ in area $k$. Formally under bounded DP, we have:
\begin{itemize}
\item $D = \{(x_i, y_i): i = 1, \ldots, m\}$
\item $D' = \{(x_j, y_j): j = 1, \ldots, m\}$
\item $y_i^{(D)} = y_j^{(D')}$ for all $i = j \neq k$
\item $x_k^{(D)} = x_k^{(D')}$ (area-level covariates are public)
\item $y_k^{(D)} \neq y_k^{(D')}$
\end{itemize}

Let $\Theta_i$ denote the mechanism that outputs an estimate for area $i$ by drawing once from the posterior distribution:
\begin{equation}
\tilde{\theta}_i = \Theta_i(D) \sim N(\mu_i^D, \sigma_i^2)
\end{equation}
where $\mu_i^D= (\sigma_v^2 y_i^{(D)} + \sigma_y^2 x_i^T\hat{\beta}_{WLS}^{(D)})/(\sigma_v^2 + \sigma_y^2)$ is the posterior mean under database $D$ and $\sigma_i^2$ is the posterior variance defined above (which depends on $x_i$ but not on the data values).

Our goal is to characterize how the distribution of $\tilde{\theta}_i$ changes when the database changes from $D$ to $D'$ as measured by the PLRV $\mathcal{L}(\tilde{\theta}_i)$.

\subsection{Why \texorpdfstring{$\varepsilon$}{epsilon}-DP Cannot Be Achieved}\label{sec:no_eps_dp}

\begin{theorem}[Privacy Loss for the Bayesian Fay-Herriot Model]\label{thm:expected-privacy-loss}
Let $\Theta_i$ be the mechanism that outputs an estimate for area $i$ by drawing once from the posterior distribution given in Equation~(\ref{eqn: posterior}). Assume the variance parameters $\sigma^2_v$ and $\sigma_{y}^2$ are known, and the area-level information $x_i$ is public. For any two neighboring databases $D$ and $D'$ that differ in one record in area $k$, the privacy loss random variable $\mathcal{L}(\tilde{\theta}_i)$ under $\tilde{\theta}_i \sim N(\mu_i^D, \sigma_i^2)$ follows a Gaussian distribution:
    \begin{equation}\label{eq:privacy-loss-distribution}
    \mathcal{L}(\tilde{\theta}_i) \sim N\left(\frac{(\mu_i^D - \mu_i^{D'})^2}{2\sigma_i^2}, \frac{(\mu_i^D - \mu_i^{D'})^2}{\sigma_i^2}\right)
    \end{equation}
\end{theorem}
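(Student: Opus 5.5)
The plan is to compute the log-ratio of the two Gaussian densities directly and then use the fact that it is an affine function of the Gaussian draw. Two features of the setup make this possible. First, the posterior variance $\sigma_i^2$ in Equation~(\ref{eq:sigma_i}) depends only on $\sigma_y^2$, $\sigma_v^2$ and the public covariates $x_1,\dots,x_m$, not on the outcome values. Because the variance components are treated as fixed and the $x_j$ are public, $\sigma_i^2$ is therefore the same under $D$ and $D'$. Second, only the posterior mean changes between the two databases, from $\mu_i^D$ to $\mu_i^{D'}$. This may happen through $y_i$ itself when $i=k$, or only through $\hat\beta_{WLS}$ when $i\neq k$. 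The argument does not need the form of this difference; it only needs $\Delta_i := \mu_i^D-\mu_i^{D'}$ to be a deterministic quantity once $D$ and $D'$ are fixed.

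The first step is to write the PLRV with both densities $N(\mu_i^D,\sigma_i^2)$ and $N(\mu_i^{D'},\sigma_i^2)$ evaluated at the same point $\theta$. The normalizing constants cancel because the variances are equal, leaving
\[
\mathcal{L}(\theta)=\frac{(\theta-\mu_i^{D'})^2-(\theta-\mu_i^{D})^2}{2\sigma_i^2}.
\]
Expanding the difference of squares gives $(\mu_i^D-\mu_i^{D'})(2\theta-\mu_i^D-\mu_i^{D'})$, so $\mathcal{L}(\theta)=\Delta_i(2\theta-\mu_i^D-\mu_i^{D'})/(2\sigma_i^2)$. This is affine in $\theta$.

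The second step is to evaluate this at the released draw $\tilde\theta_i\sim N(\mu_i^D,\sigma_i^2)$. Following the standard convention, the PLRV is distributed under the mechanism run on $D$. Write $\tilde\theta_i=\mu_i^D+\sigma_i Z$ with $Z\sim N(0,1)$. Then $2\tilde\theta_i-\mu_i^D-\mu_i^{D'}=\Delta_i+2\sigma_i Z$, and so
\[
\mathcal{L}(\tilde\theta_i)=\frac{\Delta_i^2}{2\sigma_i^2}+\frac{\Delta_i}{\sigma_i}Z .
\]
An affine map of a standard normal is normal. Here the mean is $\Delta_i^2/(2\sigma_i^2)$ and the variance is $\Delta_i^2/\sigma_i^2$, which is exactly Equation~(\ref{eq:privacy-loss-distribution}).

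There is no real computational obstacle. The only point needing care is the justification that $\sigma_i^2$ is identical under $D$ and $D'$. If the two variances differed, the log-ratio would contain a quadratic term in $\theta$ and the PLRV would no longer be Gaussian. This is where both listed assumptions are used. I would also add a remark on why the result implies the section's title claim. Whenever $\Delta_i\neq 0$, the PLRV is Gaussian with positive variance, so it has unbounded support. Hence no finite $\varepsilon$ bounds $\mathcal{L}$ on a set of positive probability, and pure $\varepsilon$-DP fails.
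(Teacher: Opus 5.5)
Your proof is correct and matches the paper's argument (Lemma S1 in the supplement). The paper likewise uses the common variance $\sigma_i^2$ to cancel the normalizing constants, writes the log-ratio as an affine function of $\tilde\theta_i$, and reads off the mean $(\mu_i^D-\mu_i^{D'})^2/(2\sigma_i^2)$ and variance $(\mu_i^D-\mu_i^{D'})^2/\sigma_i^2$. It justifies that $\sigma_i^2$ is the same under $D$ and $D'$ just as you do, because $\sigma_i^2$ depends only on the known variance components and the public covariates. One small wording fix in your closing remark: the point is that $\mathcal{L}$ exceeds every finite $\varepsilon$ with positive probability, so it cannot be bounded almost surely. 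It is not true that $\mathcal{L}$ fails to be bounded on any set of positive probability.
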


\begin{proof}
See Section~S1.1 of the supplement.
\end{proof}

Since $\epsilon$-DP requires $\mathcal{L}(\Theta)$ to be bounded in $[-\varepsilon,\varepsilon]$ but $\mathcal{L}(\tilde{\theta}_i)$ has unbounded support, the mechanism $\Theta_i$ can never satisfy $\varepsilon$-DP.

\subsection{Connection to the Gaussian Mechanism and R\'enyi DP}\label{sec:rdp}
While achieving pure $\varepsilon$-DP is impossible for the Bayesian Fay-Herriot model, obtaining results for R\'enyi DP and $\rho$-zCDP is straightforward once we notice the close connection between the mechanism $\Theta_i$ and the Gaussian mechanism. The mechanism $\Theta_i$ can be interpreted as a Gaussian mechanism that adds zero-centered noise to the posterior mean. Thus, to derive the privacy guarantees of $\Theta_i$, we need to set $\sigma_i^2$ equal to the variance of the Gaussian mechanism ($(\Delta^f_2)^2\alpha/(2\varepsilon)$ for R\'enyi DP or $(\Delta^f_2)^2/(2\rho)$ for $\rho$-zCDP) and solve for the respective privacy parameter. The major challenge lies in computing the sensitivity $\Delta$ for the Bayesian Fay-Herriot Model, which we derive in Section~S1.2 of the supplement.

\begin{theorem}[Privacy Guarantees for the Bayesian Fay-Herriot Model under R\'enyi Differential Privacy]\label{thm:rdp}
Let $\Theta_i$ be the mechanism that outputs an estimate for area $i$ by drawing once from the posterior distribution given in Equation~(\ref{eqn: posterior}). Assume the variance parameters $\sigma^2_v$ and $\sigma_{y}^2$ are known, and the area-level information $x_i$ is public. For any two neighboring databases $D$ and $D'$ that differ only in one record in area $k$, the mechanism $\Theta_i$ satisfies $(\alpha, \varepsilon_\alpha)$-R\'enyi differential privacy for any $\alpha > 1$, with
\begin{equation}\label{eq:Renyi_guarantee}
    \varepsilon_i^{(\alpha)}\leq\frac{\alpha  [\max\{\Delta_{y_i};B\,S_y\} ]^2}{2\sigma_i^2}\leq\frac{\alpha S^2_{y}}{2\sigma_{i}^2}
\end{equation}
where $\Delta_{y_i}=\max_{D \sim D'} |y_i^{(D)} - y_i^{(D')}|$ is  sensitivity of the direct survey estimate for area $i$ and $S_{y} =  \max_i\Delta_{y_i}$ is the maximum sensitivity across all direct estimators. Under bounded DP and fixed weights, using the Horvitz-Thompson estimator as the direct estimator leads to $\Delta_{y_i}^{HT}=\frac{w_{\max,i}R_{y_i}}{N_i}$, where $w_{\max,i}=\max_r w_{ir}$ is the maximum weight in area $i$ and $R_{y_i}$ is the range of possible values of $Y$ in area $i$ ($R_{y_i} = 1$ for a binary outcome). The sensitivity for the H\'ajek estimator is $\Delta_{y_i}^{HJ}=\frac{w_{\max,i}R_{y_i}}{\sum_r w_{ir}}$.
\end{theorem}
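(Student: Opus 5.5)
The plan has three steps: reduce the Rényi divergence between the output distributions under $D$ and $D'$ to a squared difference of posterior means, bound that difference using the linear structure of the posterior mean, and finally compute the sensitivities of the two direct estimators.

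\textbf{Step 1 (reduction to the means).} The posterior variance $\sigma_i^2$ in (\ref{eq:sigma_i}) depends only on the public $x_i$ and the fixed variance components, not on the data values. Hence $\Theta_i(D)=N(\mu_i^D,\sigma_i^2)$ and $\Theta_i(D')=N(\mu_i^{D'},\sigma_i^2)$ are Gaussians with a common variance. The standard closed form for the Rényi divergence between two such Gaussians (the same computation that underlies the Gaussian mechanism, or obtained directly from the moment generating function of the Gaussian privacy loss in Theorem~\ref{thm:expected-privacy-loss}) gives
\[
D_\alpha\bigl(\Theta_i(D)\,\|\,\Theta_i(D')\bigr)=\frac{\alpha(\mu_i^D-\mu_i^{D'})^2}{2\sigma_i^2}.
\]
It therefore suffices to bound $\sup_{D\sim D'}|\mu_i^D-\mu_i^{D'}|$ by $\max\{\Delta_{y_i};B\,S_y\}$.

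\textbf{Step 2 (linear structure of the posterior mean).} Under equal sampling variances the factor $(1-B)$ cancels in $\hat\beta_{WLS}$, so $\hat\beta_{WLS}=(X^TX)^{-1}X^Ty$ with $y=(y_1,\dots,y_m)^T$. Writing $H=X(X^TX)^{-1}X^T$ for the hat matrix, this gives
\[
\mu_i^D=(1-B)y_i+B\sum_{j}H_{ij}y_j .
\]
Because $D$ and $D'$ differ in a single record in area $k$, only $y_k$ changes. Set $\delta=y_k^{(D)}-y_k^{(D')}$, so $|\delta|\le\Delta_{y_k}$. Then
\[
\mu_i^D-\mu_i^{D'}=\bigl[(1-B)\mathbf 1\{i=k\}+BH_{ik}\bigr]\delta .
\]
The main technical point is to control the entries of $H$. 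Since $H$ is a symmetric idempotent projection, $H_{kk}\in[0,1]$. By Cauchy--Schwarz applied to $H=H^TH$, we also have $|H_{ik}|\le\sqrt{H_{ii}H_{kk}}\le1$.
\begin{itemize}
\item If $i=k$, the coefficient $1-B+BH_{kk}$ lies in $[1-B,1]$, so $|\mu_i^D-\mu_i^{D'}|\le\Delta_{y_i}$.
\item If $i\neq k$, the coefficient is $BH_{ik}$, so $|\mu_i^D-\mu_i^{D'}|\le B\,\Delta_{y_k}\le B\,S_y$.
\end{itemize}
Taking the worst case over $k$ yields the first inequality in (\ref{eq:Renyi_guarantee}). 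The second inequality follows because $\Delta_{y_i}\le S_y$ and $B\le1$. The bound holds for every $\alpha>1$, which gives the RDP statement.

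\textbf{Step 3 (sensitivities of the direct estimators).} Under bounded DP with fixed weights, a neighbor changes one value $Y_{ir}$ within a range of width $R_{y_i}$. For the Horvitz--Thompson estimator (\ref{eq:HT}) the numerator changes by at most $w_{ir}R_{y_i}\le w_{\max,i}R_{y_i}$, while the denominator $N_i$ is fixed. This gives $\Delta^{HT}_{y_i}=w_{\max,i}R_{y_i}/N_i$. For the H\'ajek estimator (\ref{eq:hajek}) the denominator $\sum_r w_{ir}$ is unchanged because the weights are held fixed, giving $w_{\max,i}R_{y_i}/\sum_r w_{ir}$. Both bounds are attained by changing the record with the largest weight from one end of the range to the other.

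I expect the only delicate step to be the entrywise control of the hat matrix in Step~2, together with being explicit that the fixed-weight assumption is what keeps the H\'ajek denominator constant.
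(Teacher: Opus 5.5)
Your proposal is correct and follows essentially the same route as the paper. Both arguments use the equal-variance Gaussian R\'enyi divergence formula, the hat-matrix Cauchy--Schwarz bound $|H_{ik}|\le\sqrt{H_{ii}H_{kk}}\le 1$ on the synthetic component, the case split $i=k$ versus $i\neq k$, and direct-estimator sensitivities obtained by moving the maximum-weight record across the outcome range. The only cosmetic difference is that you merge the direct and synthetic parts into the single coefficient $(1-B)\mathbf{1}\{i=k\}+BH_{ik}$ where the paper bounds them separately with the triangle inequality, and this yields the same bound.
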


\begin{proof}
See Section~S1.2 of the supplement.
\end{proof}

\begin{remark}[Sensitivity of the direct estimators]\label{rem:sensitivity}
Under bounded (substitution) DP with fixed weights, the sensitivity of both direct estimators is approximately the same: changing one respondent's outcome $Y_{kr}$ shifts the numerator of both estimators by $w_{kr}$, and neither denominator changes: the H\'{a}jek denominator $\sum_r w_{ir}$ is unaffected because the same respondents remain in the sample with the same weight, and the HT denominator $N_i$ is a fixed population total. The sensitivities are therefore approximately identical:
\begin{equation}
\Delta_{y_i}^{\,HJ} = \frac{w_{\max,i}\,R_{y_i}}{\sum_r w_{ir}} \approx \frac{w_{\max,i}\,R_{y_i}}{N_i} = \Delta_{y_i}^{\,HT}
\end{equation}
where the approximation holds because $\sum_r w_{ir}$ estimates $N_i$; the two coincide exactly only for designs in which $\sum_r w_{ir} = N_i$.

One might conjecture that the common practice of standardizing the survey weights for the H\'{a}jek estimator (e.g., rescaling so that $w_{ir}^* = w_{ir}/\bar{w}_i$) could reduce the sensitivity by shrinking $w_{\max}$. However, the H\'{a}jek estimator is a ratio of weighted sums and is therefore invariant to any multiplicative rescaling of the weights: the rescaling factor cancels in numerator and denominator simultaneously.
\end{remark}

\begin{remark}[Composition for Multiple Areas]\label{rem:composition}
Theorem~\ref{thm:rdp} provides the RDP guarantee for releasing a single area's estimate. When estimates for all $m$ areas are released, sequential RDP composition yields $(\alpha, \sum_{i=1}^m \varepsilon_i^{(\alpha)})$-RDP, which can be conservative. For the intercept-only model which we discuss in Section \ref{sec:intercept-exact}, Proposition~\ref{prop:intercept-varying} provides a tighter joint-release bound by directly evaluating the R\'enyi divergence of the joint release.
\end{remark}

\subsection{Relationship with \texorpdfstring{$\rho$}{rho}-zCDP}\label{sec:cZDP_results}

The R\'enyi DP bound has a direct connection to $\rho$-zero concentrated differential privacy ($\rho$-zCDP) \citep{bun2016concentrated}. Since the bound takes the form $\varepsilon_\alpha = \alpha \cdot c$ for some constant $c$, the mechanism satisfies $\rho$-zCDP with:
\begin{equation}\label{eq:czdp_guarantee}
\rho_i = \frac{[\max\{\Delta_{y_i};B\,S_y\}]^2}{2\sigma_{i}^2}\leq\frac{S_{y}^2}{2\sigma_{i}^2}
\end{equation}
The conversion to $(\varepsilon,\delta)$-DP is given in Equation~(\ref{eq:zcdp-conversion}).

\subsection{Exact Guarantees for the Intercept-Only Model}\label{sec:intercept-exact}

For the special case of the intercept-only Fay-Herriot model ($x_i = 1$ for all $i$) that we use in both applications in Section \ref{sec:applications}, we can obtain tight privacy bounds without the assumption of equal sampling variances across areas.

\begin{proposition}[Privacy guarantees for the intercept-only model with area specific variances]\label{prop:intercept-varying}
Consider the intercept-only Fay-Herriot model with area-specific sampling variances $\sigma_{y_i}^2$, shrinkage factors $B_i = \sigma_{y_i}^2/(\sigma_{y_i}^2 + \sigma_v^2)$, and $W_+ = \sum_{j=1}^m(1-B_j)$. Assume the variance components are publicly available and the survey weights are fixed under bounded DP. For any neighboring pair differing in one record in area $k$, the change in the posterior mean of area $i$ is exactly $\mu_i^D - \mu_i^{D'} = c_{ik}\,(y_k^{(D)} - y_k^{(D')})$, where
\begin{equation}\label{eq:exact-coefficients}
c_{ik} = \begin{cases} (1-B_k)\left(1 + \dfrac{B_k}{W_+}\right), & i = k, \\[2mm] \dfrac{B_i\,(1-B_k)}{W_+}, & i \neq k. \end{cases}
\end{equation}
Consequently:
\begin{enumerate}
\item[(i)] The mechanism $\Theta_i$ releasing a single posterior draw for area $i$ satisfies $\rho_i$-zCDP with
\begin{equation}\label{eq:rho-per-area}
\rho_i^{Int} = \frac{[\max_{1 \leq k \leq m} (c_{ik}\Delta_{y_k})]^2}{2\sigma_i^2}\leq\frac{S^2_y}{2\sigma^2_i}.
\end{equation}
\item[(ii)] The mechanism $\Theta = (\Theta_1, \ldots, \Theta_m)$ releasing draws for all $m$ areas satisfies $\rho_{\mathrm{joint}}$-zCDP with
\begin{equation}\label{eq:rho-joint}
\rho_{\mathrm{joint}} =  \max_{1 \leq k \leq m}\; \frac{\Delta_{y_k}^2}{2} \sum_{i=1}^m \frac{c_{ik}^2}{\sigma_i^2}.
\end{equation}
\end{enumerate}
The coefficients satisfy $c_{kk} \leq 1$ and $c_{ik} \leq B_i$ for $i \neq k$; these bounds hold exactly for any finite $m$.
\end{proposition}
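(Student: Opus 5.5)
The plan is to compute the posterior means explicitly and then apply the Gaussian R\'enyi divergence formula. For the intercept-only model, $x_j=1$, so $\hat\beta_{WLS}=\sum_j(1-B_j)y_j/W_+$. With area-specific variances the posterior mean is $\mu_i^D=(1-B_i)y_i+B_i\hat\beta_{WLS}$. This is the analogue of Equation~(\ref{eqn: posterior}) with $B_i$ replacing $B$, and it follows from the same conjugate Gaussian computation. The posterior variance is $\sigma_i^2=\sigma_{y_i}^2\sigma_v^2/(\sigma_{y_i}^2+\sigma_v^2)+B_i^2\sigma_v^2/W_+$. It depends only on the public variance components, so it is the same under $D$ and $D'$.

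Since $\mu_i^D$ is linear in $(y_1,\dots,y_m)$ and only $y_k$ changes, $\mu_i^D-\mu_i^{D'}=\partial\mu_i/\partial y_k\cdot(y_k^{(D)}-y_k^{(D')})$. Differentiating gives $\partial\hat\beta/\partial y_k=(1-B_k)/W_+$. For $i=k$,
\[
c_{kk}=(1-B_k)+B_k(1-B_k)/W_+=(1-B_k)(1+B_k/W_+),
\]
and for $i\neq k$, $c_{ik}=B_i(1-B_k)/W_+$. This establishes Equation~(\ref{eq:exact-coefficients}).

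For the coefficient bounds, note that $W_+\ge 1-B_k$ because every term of $W_+$ is nonnegative. Hence $(1-B_k)/W_+\le 1$, which immediately gives $c_{ik}\le B_i$ for $i\ne k$. For $c_{kk}$, write
\[
c_{kk}=(1-B_k)+B_k\,\frac{1-B_k}{W_+}\le(1-B_k)+B_k=1 .
\]
This step is the one where it is tempting to use $m\to\infty$ asymptotics, so I will take care to use only $W_+\ge 1-B_k$, which holds for every finite $m$.

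For (i), fix area $i$ and use the bounded-DP sensitivity of the direct estimators from Theorem~\ref{thm:rdp} and Remark~\ref{rem:sensitivity}. These give $|y_k^{(D)}-y_k^{(D')}|\le\Delta_{y_k}$, and the differing record may lie in any area $k$. Therefore
\[
|\mu_i^D-\mu_i^{D'}|\le\max_k c_{ik}\Delta_{y_k}.
\]
The R\'enyi divergence between $N(\mu,\sigma^2)$ and $N(\mu',\sigma^2)$ is $\alpha(\mu-\mu')^2/(2\sigma^2)$, which is the Gaussian mechanism fact already used in Theorem~\ref{thm:rdp}. Taking the worst case over neighbors gives $\rho_i^{Int}$ as in Equation~(\ref{eq:rho-per-area}). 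The upper bound $\le S_y^2/(2\sigma_i^2)$ follows because $c_{ik}\le1$ for all $i,k$: for $i\neq k$ we have $c_{ik}\le B_i<1$. Then $c_{ik}\Delta_{y_k}\le S_y$.

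For (ii), I will not use composition. The joint release is a single draw from a product of independent normals, $N(\mu^D,\mathrm{diag}(\sigma_i^2))$. The joint posterior actually has correlations through $\beta$, but the mechanism defined here releases independent per-area draws $\Theta_i$, and I take $\Theta$ to be that product. The R\'enyi divergence is additive over independent coordinates and the covariance is identical under $D$ and $D'$, so
\[
D_\alpha=\frac{\alpha}{2}\sum_i\frac{c_{ik}^2(y_k-y_k')^2}{\sigma_i^2}\le\alpha\,\frac{\Delta_{y_k}^2}{2}\sum_i\frac{c_{ik}^2}{\sigma_i^2}.
\]
Maximizing over the area $k$ containing the changed record gives Equation~(\ref{eq:rho-joint}).

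The main obstacle is the modeling point in (ii). If one instead releases a draw from the joint posterior, the covariance is not diagonal. The bound then becomes $\frac{\alpha}{2}\,\delta^T\Sigma^{-1}\delta$ with $\delta=c_{\cdot k}(y_k-y_k')$, and the Mahalanobis form must be evaluated instead. I would state explicitly that the proposition concerns independent per-area draws, matching the definition $\Theta=(\Theta_1,\dots,\Theta_m)$.
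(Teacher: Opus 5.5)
Your proposal is correct and follows essentially the same route as the paper: you write the intercept-only posterior means explicitly, read off the exact linear coefficients $c_{ik}$, use $W_+\ge 1-B_k$ to get $c_{kk}\le 1$ and $c_{ik}\le B_i$, apply the equal-variance Gaussian R\'enyi divergence for (i), and for (ii) use additivity of the R\'enyi divergence over independent per-area draws and then maximize over $k$. Your remark that (ii) concerns independent marginal draws, rather than one draw from the correlated joint posterior, makes explicit an assumption the paper's proof uses without comment when it states that the $m$ posterior draws are independent.
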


\begin{proof}
See Section~S1.3 of the supplement.
\end{proof}

The coefficients $c_{ik}$ are identities, not bounds: the only inequality in the result is $|y_k^{(D)} - y_k^{(D')}| \leq \Delta_{y_k}$. The maximum in (\ref{eq:rho-per-area}) is over the area $k$ in which the record changes. Whether it is attained at $k = i$ can be verified by checking $c_{ik}\Delta_{y_k} \leq c_{ii}\Delta_{y_i}$ for all $k \neq i$. This holds in both applications in Section~\ref{sec:applications}, so the per-area guarantee there is $\rho_i^{Int} = c_{ii}^2\,\Delta_{y_i}^2/(2\sigma_i^2)$ with $c_{ii} = (1-B_i)(1 + B_i/W_+)$. The factor $c_{ii}^2$ measures how much tighter the exact guarantee is than the bound $\Delta_{y_i}^2/(2\sigma_i^2)$ obtained from $c_{kk} \leq 1$; for a large number of areas, $c_{ii}^2 \approx (1-B_i)^2$, so the tightening is governed by the shrinkage of area $i$.

The joint-release bound (\ref{eq:rho-joint}) takes a maximum over the perturbed area $k$ of a sum over all $m$ released areas; it is not an additive composition over areas. For each fixed $k$, the sum splits exactly into its $i = k$ term and a cross-area remainder:
\begin{equation}\label{eq:rho-joint-decomp}
\frac{\Delta_{y_k}^2}{2}\sum_{i=1}^m \frac{c_{ik}^2}{\sigma_i^2} = \frac{\Delta_{y_k}^2\, c_{kk}^2}{2\sigma_k^2}\,\left(1 + r_k\right), \qquad r_k = \frac{\sigma_k^2}{(W_+ + B_k)^2}\sum_{i \neq k}\frac{B_i^2}{\sigma_i^2}.
\end{equation}
The remainder $r_k$ aggregates $m-1$ terms of order $1/W_+^2$ and is therefore $O(m/W_+^2)$, vanishing as $O(1/m)$ whenever the average shrinkage is bounded away from one (areas that arrive essentially fully shrunk add to $m$ but not to $W_+$, and do not dilute the remainder); in the applications it is computed exactly and satisfies $r_k < 10^{-4}$ (ACS) and $r_k \leq 0.05$ (BRFSS) for every $k$ (Section~\ref{sec:applications}). Hence $\rho_{\mathrm{joint}} \approx \max_k\, \Delta_{y_k}^2\,c_{kk}^2/(2\sigma_k^2)$: releasing all $m$ areas costs essentially no more than the single most sensitive area. In particular, $\rho_{\mathrm{joint}} \geq \max_i \rho_i^{Int}$ always holds.

\subsection{Robustness to Variance Estimation}\label{sec:robustness}

Theorems~\ref{thm:expected-privacy-loss},~\ref{thm:rdp}, and Proposition~\ref{prop:intercept-varying} assume the variance components $\sigma^2_v$ and $\sigma^2_{y_i}$ are known. In practice, both $\sigma^2_v$ and $\sigma^2_{y_i}$ are estimated from the sensitive data. This violates the formal privacy guarantees as any usage of the unprotected data implies some privacy leakage that needs to be taken into account.

There are three possible options to address this problem:
\begin{enumerate}
    \item Directly quantifying the additional privacy leakage from using the two variance components.
    \item
    Obtaining differentially private estimates of the two variance components to be used as plug-in estimates in the downstream calculations.
    \item Being pragmatic by treating the quantities as known and accepting the fact that the privacy guarantees only hold, if the two quantities are treated as invariants.
    \end{enumerate}

Option 1 would require fundamental adjustments for computing the privacy leakage. The posterior could no longer be stated in closed form and MCMC methods would be required. However, we postulate that the leakage would grow substantially, as the data will be accessed at each iteration of the algorithm.

Option 2 would require privatized estimates for the two variance components. We note that the overall privacy loss when using these privatized estimates to compute the posterior means would simply be the sum over the budget spent when computing the privatized variance estimates and the calculated budget based on Equation (\ref{eq:Renyi_guarantee}) or Equation (\ref{eq:czdp_guarantee}) due to the composition and immunity to postprocessing properties of differential privacy. However, the biggest disadvantage of this option would be that we would no longer get privacy for free and applied researchers would have to learn how to compute the privatized variance estimates.

Option 3 has the obvious advantage that no further adjustments are necessary. However, since the data are accessed to estimate the two variance components, there is some privacy leakage that is not accounted for when deriving the privacy guarantees based on Eq. (\ref{eq:Renyi_guarantee}) or Eq. (\ref{eq:czdp_guarantee}). From a practical perspective, this leakage might be considered to be small--how much more can I learn about any individual, by knowing that the released small area means were computed using variance estimates computed from the data? Still, from a formal standpoint, we no longer can make any claims regarding the formal privacy guarantees of these means.

We can circumvent this dilemma by declaring the two variance components to be invariants. Invariants are aspects of the data that are declared to be fixed, i.e., they do not change over neighboring datasets.
In our context, declaring the two variance components to be invariant would imply that the domain of neighboring datasets $D$ and $D'$ would be limited to those datasets that would result in the same estimates for the two variance components.  While this is a substantial reduction in the size of the domain, the formal privacy guarantees would still hold for this domain. Whether this domain would still be large enough to provide meaningful privacy guarantees would be another interesting topic for future research.

\subsection{Analytical Properties}\label{sec:analytical}

The conservative privacy guarantee $\rho_i\leq S_y^2/(2\sigma^2_i)$ that holds for both settings discussed in Section \ref{sec:cZDP_results} and \ref{sec:intercept-exact} depends on two quantities: the maximum of the sensitivities of the direct estimators  $S_y=\max_i\Delta_{y_i}$ and the posterior variance $\sigma_i^2$.
For the H\'{a}jek estimator under bounded DP, with outcome range $R_{y_i}$, the per-area sensitivity is:
\begin{equation}\label{eq:hajek-sensitivity}
\Delta^{HJ}_{y_i} = \frac{w_{\max,i}\, R_{y_i}}{\sum_r w_{ir}}
\end{equation}
where $w_{\max,i} = \max_r w_{ir}$ is the largest survey weight in area $i$ and $\sum_r w_{ir}$ is the sum of all weights.
Under simple random sampling (SRS), every respondent has weight $w_{ir} = N_i/n_i$, so $\Delta_{y_i} = R_{y_i}/n_i$. For complex surveys with unequal weights, the sensitivity is amplified by the weight inequality ratio:
\begin{equation}\label{eq:hajek-wratio}
\Delta^{HJ}_{y_i} = \frac{w_{\max,i}/\bar{w}_i}{n_i}R_{y_i}
\end{equation}
where $\bar{w}_i = (\sum_r w_{ir})/n_i$ is the mean weight. The ratio $w_{\max,i}/\bar{w}_i$ captures how much the most influential respondent deviates from the average---a quantity determined entirely by the survey design.

Three quantities therefore govern the privacy guarantee: the sample size $n_i$, the weight inequality $w_{\max}/\bar{w}$, and the posterior variance $\sigma_i^2$ (which depends on the variance components $\sigma_{y_i}^2$ and $\sigma_v^2$). Equal-probability designs ($w_{\max}/\bar{w} = 1$) achieve the strongest privacy for a given sample size, while designs with highly variable weights can substantially inflate the privacy cost. Since the sensitivity enters $\rho_i$ squared, a design whose most influential respondent carries seven times the average weight, as in the ACS application below, has a $\rho_i$ roughly fifty times larger ($7^2 \approx 50$) than an equal-probability design with the same sample size.

For both estimators, the sensitivity scales linearly with $R_{y_i}$, and thus $\rho \propto R_y^2$.

\section{Applications}\label{sec:applications}
We illustrate the framework using two applications spanning different domains and sampling regimes. The first uses American Community Survey (ACS) microdata to estimate poverty prevalence across 2,462 PUMAs. The second uses Behavioral Risk Factor Surveillance System (BRFSS) data to estimate smoking prevalence across 52 substrata in Washington state. Both applications fit an intercept-only Fay-Herriot model with area-specific sampling variances, invoking Proposition~\ref{prop:intercept-varying}. In both applications the maximum in (\ref{eq:rho-per-area}) is attained at $k = i$ for every area, so the reported per-area guarantee is $\rho_i = c_{ii}^2\Delta_{y_i}^2/(2\sigma_i^2)$. Throughout, we refer to $\rho_i$ as the privacy loss of area $i$; smaller values mean stronger protection. We use the H\'{a}jek estimator in both applications as it is the standard choice for estimating means and proportions. For it does not require the population sizes $N_i$ to be known and it tends to have smaller sampling variance than the Horvitz-Thompson estimator even in non-private contexts \citep{lohr2021sampling}.
We treat the weights as fixed which means we implicitly pretend that these weights are design weights and do not include any nonresponse adjustments or calibration. Furthermore, we treat the REML estimates of $\sigma_v^2$ and $\sigma_{y_i}^2$ as invariants (Section~\ref{sec:robustness}). How to account for these weighting adjustment steps is currently an unsolved problem for DP \citep{drechsler2023differential}. The reported $\varepsilon$ values use $\delta < 1/N$ with $N$ the number of respondents, giving $\delta = 10^{-7}$ for the ACS and $\delta = 10^{-5}$ for the BRFSS; the $\rho$ values are $\delta$-free.

\subsection{Data and Methods}\label{sec:data}

\subsubsection{ACS Poverty Prevalence}\label{sec:acs-pums}

We analyze 2022 ACS 1-year Public Use Microdata Sample (PUMS) data from IPUMS USA \citep{ipums2025usa}, comprising 3.19 million person records across 2,462 Public Use Microdata Areas (PUMAs).

For each PUMA, we compute:
\begin{enumerate}
    \item \textbf{Direct estimate:} Weighted poverty rate using the H\'{a}jek estimator $\hat{y}_i = \sum_r w_{ir} Y_{ir} / \sum_r w_{ir}$, where $w_{ir}$ are the ACS person weights (PWGTP).
    \item \textbf{Sampling variance:} Design-based sampling variances $\sigma_{y_i}^2$ via successive difference replication using 80 replicate weights (PWGTP1--PWGTP80).
    \item \textbf{Sensitivity:} $\Delta_{y_i} = w_{\max,i} / \sum_r w_{ir}$, computed from the person weights. This requires that the per-area weight bounds $w_{\max,i}$ and weight totals $\sum_r w_{ir}$ are either published as survey design metadata or treated as public.\end{enumerate}

We fit an intercept-only Fay-Herriot model using REML via the \texttt{eblupFH} function in the \texttt{sae} package \citep{molina-marhuenda:2015}:
\begin{equation}
y_i = \mu + v_i + e_i, \quad v_i \sim N(0, \sigma^2_v), \quad e_i \sim N(0, \sigma^2_{y_i})
\end{equation}
where $y_i$ is the direct poverty rate estimate and $\sigma^2_{y_i}$ is the design-based sampling variance for PUMA $i$.

\subsubsection{BRFSS Smoking Prevalence}\label{sec:brfss}

To demonstrate the framework in a different domain and sampling regime, we apply it to
the 2022 Behavioral Risk Factor Surveillance System (BRFSS), estimating small-area
current smoking prevalence in Washington state. The BRFSS is the largest continuously
conducted health survey in the United States, collecting data on behavioral risk factors
through telephone interviews \citep{cdc2022brfss}. Small area estimation strategies based on BRFSS
data have been used in the public health literature to estimate county- and
community-level prevalence of smoking and other risk factors
\citep{dwyer2014cigarette, li2009small}.

We analyze 24,197 Washington state respondents across 52 geographic substrata
defined by the BRFSS sampling design (typically corresponding to counties or
groups of counties).\footnote{The public-use BRFSS file does not include county
FIPS codes. The substrata are identified via the \texttt{\_STSTR} variable
after stripping the phone-type digit.} The binary outcome is current smoking
status, defined as having smoked at least 100 cigarettes in one's lifetime and
currently smoking on some or all days (\texttt{\_SMOKER3} $\in \{1, 2\}$).
The overall smoking prevalence is 9.1\%.
As in the ACS analysis, we compute H\'{a}jek direct estimates and exact
sensitivities $\Delta_{y_i} = w_{\max,i}/\sum_r w_{ir}$ for each
substratum. 

\subsubsection{A Contrasting Regime}

The BRFSS application differs from the ACS in two important ways.
First, the per-area sample sizes are much smaller: median $n = 120$
versus 1,217 in the ACS.
Second, the weight inequality is higher: median $w_{\max}/\bar{w} = 8.24$
versus 6.92 in the ACS. Table~\ref{tab:setup} summarizes the
key differences.

\begin{table}[htbp]
\centering
\caption{Design and model characteristics of the two applications.}
\label{tab:setup}
\begin{tabular}{lcc}
\toprule
 & ACS (poverty) & BRFSS (smoking) \\
\midrule
Survey & ACS PUMS 2022 & BRFSS 2022 \\
Outcome & Poverty indicator & Current smoker \\
Geographic units & 2,462 PUMAs & 52 substrata \\
Respondents $N$ & 3,188,316 & 24,197 \\
Sample size $n_i$: min / median / max & 368 / 1,217 / 3,911 & 41 / 120 / 5,226 \\
Weight inequality $w_{\max,i}/\bar{w}_i$: median / max & 6.92 / 41.1 & 8.24 / 17.3 \\
\midrule
$\hat{\sigma}^2_v$ (REML) & $2.73 \times 10^{-3}$ & $8.00 \times 10^{-4}$ \\
Mean sampling variance $\bar{\sigma}^2_y$ & $3.92 \times 10^{-4}$ & $2.35 \times 10^{-3}$ \\
Shrinkage $B_i$: min / median / max & 0.004 / 0.101 / 0.55 & 0.03 / 0.569 / 0.94 \\
Median tightening factor $c_{ii}^2$ & 0.81 & 0.19 \\
\bottomrule
\end{tabular}
\end{table}

\subsection{Per-Area Privacy Guarantees}\label{sec:per-area}

Table~\ref{tab:privacy} summarizes the distribution of the per-area guarantee $\rho_i$ across areas, with $\varepsilon_i$ at the application's $\delta$ as a secondary summary. For comparison, the last row reports the median of the per-area bound $\Delta_{y_i}^2/(2\sigma_i^2)$ obtained by replacing $c_{ii}$ by its upper bound of one (the per-area analogue of the $S_y$-based bound in Section~\ref{sec:analytical}; we call it the conservative bound below), which is the guarantee an agency would compute without the exact coefficients of Proposition~\ref{prop:intercept-varying}.

\begin{table}[htbp]
\centering
\caption{Distribution of the per-area guarantee across areas. $\varepsilon_i$ is obtained from $\rho_i$ via (\ref{eq:zcdp-conversion}) at $\delta = 10^{-7}$ (ACS) and $\delta = 10^{-5}$ (BRFSS). The last row gives the median of the conservative bound $\Delta_{y_i}^2/(2\sigma_i^2)$, which ignores the tightening factor $c_{ii}^2$.}
\label{tab:privacy}
\begin{tabular}{lcccc}
\toprule
 & \multicolumn{2}{c}{ACS (poverty)} & \multicolumn{2}{c}{BRFSS (smoking)} \\
\cmidrule(lr){2-3}\cmidrule(lr){4-5}
 & $\rho_i$ & $\varepsilon_i$ & $\rho_i$ & $\varepsilon_i$ \\
\midrule
Minimum & 0.003 & 0.43 & 0.040 & 1.40 \\
10th percentile & 0.017 & 1.05 & 0.050 & 1.56 \\
Median & 0.048 & 1.81 & 0.188 & 3.13 \\
90th percentile & 0.163 & 3.40 & 2.49 & 13.2 \\
Maximum & 1.51 & 11.4 & 17.4 & 45.6 \\
\midrule
Areas with $\rho_i < 0.1$ & \multicolumn{2}{c}{79\%} & \multicolumn{2}{c}{27\%} \\
Areas with $\rho_i < 1$ & \multicolumn{2}{c}{99.7\%} & \multicolumn{2}{c}{81\%} \\
\midrule
Median conservative bound $\Delta_{y_i}^2/(2\sigma_i^2)$ & 0.061 & 2.05 & 3.08 & 15.0 \\
\bottomrule
\end{tabular}
\end{table}

The median $\rho_i = 0.048$ ($\varepsilon_i = 1.81$) in the ACS reflects the combined effect of the ACS's moderate sample sizes and substantial weight inequality. The BRFSS guarantees are weaker, with median $\rho_i = 0.188$ ($\varepsilon_i = 3.13$) and a long right tail driven by substrata with small samples and extreme weight inequality: the maximum, a substratum with 42 respondents and $w_{\max}/\bar{w} = 10.4$, is $\rho_i = 17.4$.

The conservative bound overstates the privacy loss by a factor of $1/c_{ii}^2$ (Section~\ref{sec:intercept-exact}) and depends on the shrinkage. This ratio has median 1.24 across PUMAs in the lightly shrunk ACS and median 5.1 across substrata in the heavily shrunk BRFSS. This is because $c_{ii}^2 \approx (1-B_i)^2$ is small for heavily shrunk areas. The effect on the distribution is larger still: the conservative bound is largest exactly in the small, heavily shrunk substrata where $c_{ii}^2$ is smallest, so its median in the BRFSS (3.08) is 16 times the median exact guarantee (0.188). Consequently, the difference between the two applications is smaller than their sample sizes alone would suggest. The typical BRFSS substratum has a tenth of the sample of the typical PUMA, but the ratio of the median guarantees is only about four.

\subsubsection{Role of Weight Inequality}

The dominant factor explaining the variation in $\rho_i$ across areas is the weight inequality ratio $w_{\max}/\bar{w}$ rather than the sample size. Combining (\ref{eq:hajek-wratio}) with (\ref{eq:rho-per-area}) gives $\log\rho_i = 2\log(w_{\max,i}/\bar{w}_i) - 2\log n_i + \log[c_{ii}^2/(2\sigma_i^2)]$. With other things held equal, doubling the weight inequality ratio quadruples the privacy loss. Figure~\ref{fig:rho-vs-n} plots $\rho_i$ against the sample size $n_i$ (both on the log scale), with each point colored by the area's weight inequality ratio $w_{\max,i}/\bar{w}_i$. In the ACS, sample sizes lie in the range 368--3,911 and on their own show no relationship with $\rho_i$ (a regression of $\log\rho_i$ on $\log n_i$ has slope 0.12, with 95\% CI: ($-0.003$, $0.23$)), partly because larger PUMAs also tend to have more unequal weights (correlation 0.44 between $\log n_i$ and $\log(w_{\max,i}/\bar{w}_i)$). Adding $\log(w_{\max,i}/\bar{w}_i)$ to the regression gives an estimated coefficient of 1.88 (95\% CI: ($1.76$, $1.99$)) on the log weight ratio and $-0.67$ (95\% CI: ($-0.78$, $-0.56$)) on $\log n_i$. The former is closer to the value of $2$ in the identity above. To quantify the cost of unequal weights directly, we recompute every guarantee with the weights within each area set equal to $\bar{w}_i$, holding $n_i$, $\sigma_i^2$, and $c_{ii}$ fixed. This divides $\rho_i$ by $(w_{\max,i}/\bar{w}_i)^2$. For ACS, the median of this factor across PUMAs is 48, and reduces the median $\rho_i$ from 0.048 to 0.0009. 

In the BRFSS, sample sizes span the much wider range 41--5,226 and do matter (coefficient $-0.41$, with 95\% CI ($-0.72$, $-0.09$), on $\log n_i$). The estimated coefficient of the log weight ratio, 1.09 (95\% CI: ($0.005$, $2.17$)), is less precisely determined with just 52 areas in this case. The largest privacy losses belong to substrata combining small samples with unequal weights, and equal weighting would divide $\rho_i$ by a median factor of 68.

\begin{figure}[htbp]
\centering
\includegraphics[width=\textwidth]{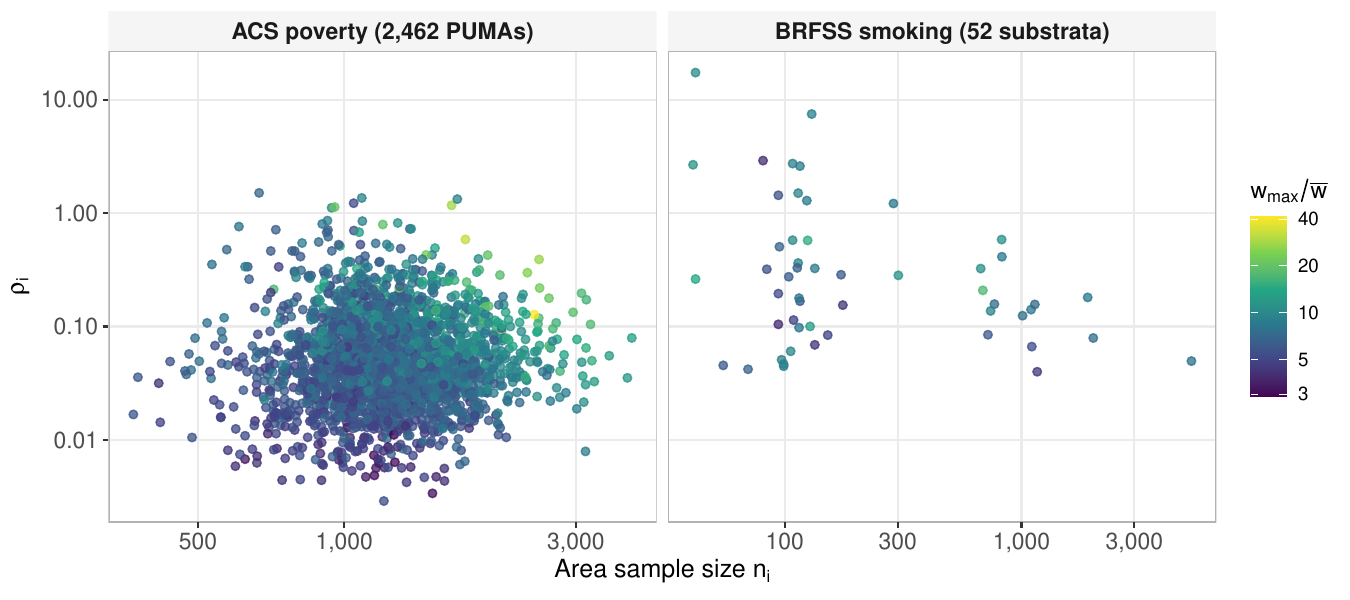}
\caption{Per-area guarantee $\rho_i$ versus area sample size $n_i$ (both on the log scale), one point per area, colored by the area's weight inequality ratio $w_{\max,i}/\bar{w}_i$. At any given sample size, the areas with the weakest guarantees (largest $\rho_i$) are those with the most unequal weights; in the BRFSS the largest values occur where small samples and unequal weights coincide.}
\label{fig:rho-vs-n}
\end{figure}

This finding has implications for survey design and output review:
\begin{enumerate}
    \item Equal-probability designs provide the strongest privacy for a given sample size.
    \item Weight trimming (capping extreme weights) directly reduces the sensitivity $\Delta_{y_i}$ and thus improves privacy, at the cost of introducing some bias in the direct estimates.
    \item When assessing the privacy of SAE outputs, agencies should report $w_{\max}/\bar{w}$ alongside the sample size, as the weight structure determines the actual privacy guarantee.
\end{enumerate}

\subsection{Composition Analysis}\label{sec:joint}

When releasing estimates for all areas simultaneously, the joint-release bound from Equation~\ref{eq:rho-joint} gives $\rho_{\mathrm{joint}} = 1.51$ for the 2,462 PUMAs ($\varepsilon_{\mathrm{joint}} = 11.4$ at $\delta = 10^{-7}$) for the ACS and $\rho_{\mathrm{joint}} = 17.8$ for the 52 substrata ($\varepsilon_{\mathrm{joint}} = 46.4$ at $\delta = 10^{-5}$) for the BRFSS. In both cases the maximum in (\ref{eq:rho-joint}) is attained at the area with the largest per-area guarantee, and the contribution of the other areas $r_k$ is negligible (Eq.~\ref{eq:rho-joint-decomp}): $r_k = 8 \times 10^{-7}$ at the maximizing PUMA ($\max_k r_k = 3.4 \times 10^{-5}$) and $r_k = 0.024$ at the maximizing substratum ($\max_k r_k = 0.042$). The joint bound is therefore driven by the single most sensitive area (the $k = i$ term) and exceeds the largest per-area guarantee by 2.4\% in the BRFSS and by a negligible amount in the ACS. Sequential composition (Section~\ref{sec:DP_properties}) of the same per-area guarantees would instead give $\sum_i \rho_i = 195$ ($\varepsilon = 307$) for the ACS and $49.6$ ($\varepsilon = 97$) for the BRFSS, overstating the joint loss by factors of 129 and 2.8; the gain from the joint bound grows with the number of areas.

\subsection{Robustness Validation}\label{sec:app-robustness}

The formal guarantees assume that the true variance components are known. In practice the variance components typically need to be estimated (see Section~\ref{sec:robustness} for a discussion of the privacy implications). Here we assess the numerical stability of the computed guarantees under potential measurement error of the estimated $\hat{\sigma}_v^2$. This is not a formal privacy analysis of the estimation step but rather a check how measurement errors will change the reported bound. To model the error, we replace $\hat{\sigma}_v^2$ with $\hat\sigma_v^2(1 + \eta)$ with $\eta \in \{\pm 0.1, \pm 0.25, \pm 0.5\}$ and recompute every per-area guarantee, which changes $B_i$, $W_+$, $c_{ii}$, and $\sigma_i^2$ simultaneously. Figure~\ref{fig:robustness} shows the distribution across areas of the relative change in $\rho_i$.

\begin{figure}[htbp]
\centering
\includegraphics[width=\textwidth]{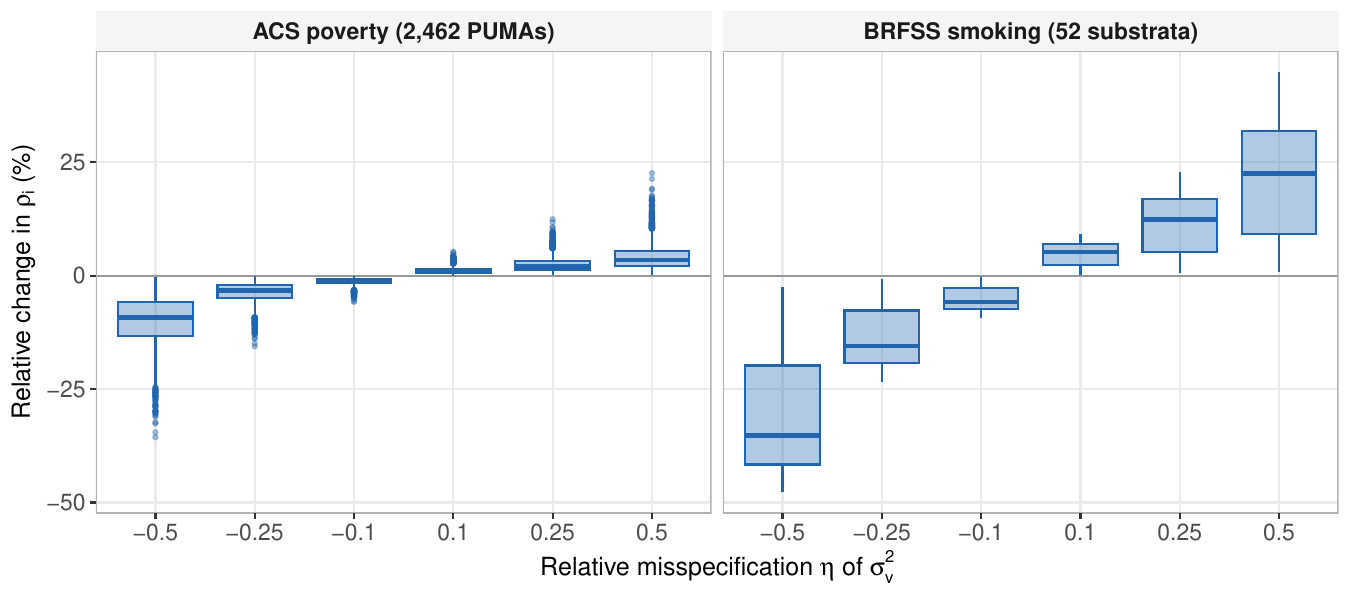}
\caption{Relative change in the per-area guarantee $\rho_i$ when the model variance is misspecified as $\hat{\sigma}_v^2(1+\eta)$. Boxes show the interquartile range across areas, whiskers extend to the most extreme values within 1.5 interquartile ranges, and points are areas beyond the whiskers.}
\label{fig:robustness}
\end{figure}

A larger model variance implies less shrinkage and a larger posterior variance; the first effect dominates, so $\rho_i$ increases with $\eta$ (weaker privacy) in every area. The magnitude follows the shrinkage. In the ACS, a 10\% error in $\hat{\sigma}_v^2$ changes the median $\rho_i$ by about 1\% and no PUMA's $\rho_i$ by more than 6\%. Even a 50\% error moves the median by less than 10\%, with the most affected PUMAs (those with the heaviest shrinkage) changing by up to $-36$\% and $+23$\%. In the BRFSS the guarantees are considerably more sensitive: a 10\% error changes $\rho_i$ by up to 9\%, and a 50\% error changes the median by $-35$\% or $+23$\% and individual substrata by up to $-48$\% and $+45$\%. 

\section{Discussion and Conclusions}\label{sec:discussion}

We have shown that the Bayesian Fay-Herriot model, when used to generate posterior draws for the small area posterior means, satisfies $\rho$-zCDP with $\rho_i \leq S_{y}^2/(2\sigma_i^2)$ (Equation~\ref{eq:czdp_guarantee}). This follows from the connection between the posterior sampling mechanism and the Gaussian mechanism (Section~\ref{sec:rdp}). For the intercept-only model, Proposition~\ref{prop:intercept-varying} provides exact coefficients $c_{ik}$ for how much the posterior mean can shift under area-specific variances for any finite $m$. The joint-release bound replaces sequential composition with a maximum over the  areas, reducing the composed guarantee from $\rho = 195$ (ACS) and $49.6$ (BRFSS) under sequential composition to $\rho = 1.5$ and $17.8$ under the Proposition's joint bound in our applications.

The most consequential finding for practice is the dominant role of weight inequality for the H\'{a}jek estimator. Under equal-probability sampling, the ACS data would yield a median $\rho_i \approx 0.0009$. The actual median of 0.048 is driven almost entirely by the weight inequality ratio $w_{\max}/\bar{w} \approx 7$. The BRFSS, with median $w_{\max}/\bar{w} \approx 8.2$, shows an even stronger effect. This suggests that agencies seeking stronger privacy guarantees for SAE outputs should consider weight trimming strategies, acknowledging the bias--privacy tradeoff this entails. A second finding, specific to the intercept-only model of Proposition~\ref{prop:intercept-varying}, is that shrinkage tightens the guarantee: the exact coefficient $c_{ii}^2 \approx (1-B_i)^2$ reduces the conservative bound by a median factor of 1.2 in the lightly shrunk ACS and 5.1 in the heavily shrunk BRFSS, at the price that the guarantee for heavily shrunk areas depends strongly on the estimated model variance.

Several limitations should be noted. First, the formal guarantees require that the variance components $\sigma_v^2$ and $\sigma_{y_i}^2$ are known. In practice, both are estimated from the data, which introduces privacy leakage not captured by our analysis. We adopt the invariant approach (Section~\ref{sec:robustness}), which restricts the domain of neighboring datasets to those yielding the same variance estimates. The most important direction for a future work is to quantify the leakage of the variance estimation step, or replacing the estimates by differentially private ones and composing the two costs (Options 1 and 2 of Section~\ref{sec:robustness}). This would turn the guarantee into an unconditional one. Second, the exact guarantees of Proposition~\ref{prop:intercept-varying} apply to the intercept-only model; extending them to a general design matrix with varying variances, as well as to proper priors, unit-level models, or non-Gaussian outcomes, requires separate analysis. Third, both the ACS PUMS and BRFSS data have themselves been subject to disclosure avoidance procedures (e.g., data swapping in the ACS, suppression of county identifiers in the BRFSS), so our computed privacy parameters describe the mechanism as applied to the public files.

The Fay-Herriot model is a standard two-level hierarchical model. The same structure arises in disease mapping, educational testing, and environmental monitoring. The analytical results in Section~\ref{sec:analytical} characterize the privacy guarantee as a function of the survey design parameters $(n_i, w_{\max}/\bar{w})$ and can be used by practitioners in any of these domains. This provides statistical agencies with a concrete, formula-based tool for assessing the inherent privacy properties of SAE outputs released as posterior draws, or as posterior means with Gaussian noise of variance $\sigma_i^2$ added (Section~\ref{sec:rdp}), and for identifying the design levers---particularly weight trimming and sample size---that could strengthen these guarantees.

\clearpage
\appendix
\setcounter{section}{0}\setcounter{equation}{0}\setcounter{table}{0}\setcounter{figure}{0}\setcounter{lemma}{0}
\renewcommand{\thesection}{S\arabic{section}}
\renewcommand{\thetable}{S\arabic{table}}
\renewcommand{\thefigure}{S\arabic{figure}}
\renewcommand{\theequation}{S\arabic{equation}}
\renewcommand{\thelemma}{S\arabic{lemma}}
\begin{center}\Large\bfseries Supplement to ``Differential Privacy Guarantees in Small Area Estimation''\end{center}
\medskip
This supplement, appended here to the main paper, contains the proofs of the results in the main paper (Section~\ref{sec:proofs}) and additional tables and figures for the two applications (Section~\ref{sec:extra}). Equation, theorem, and section numbers without the prefix S refer to the main paper.

\section{Proofs}\label{sec:proofs}

Throughout, $D$ and $D'$ are neighboring databases under bounded differential privacy that differ in one record in area $k$, so that $y_i^{(D)} = y_i^{(D')}$ for $i \neq k$, the auxiliary variables $x_i$ are public, and the variance components are known. We write $\mu_i^D$ and $\sigma_i^2$ for the posterior mean and variance of $\theta_i$ under $D$ (Equation~(16) of the main paper and the display that follows it); $\sigma_i^2$ depends only on $x_i$ and the variance components and is therefore the same under $D$ and $D'$.

\subsection{Proof of Theorem 1}\label{sec:proof-thm1}

\begin{theoremA}
For any two neighboring databases $D$ and $D'$ that differ in one record in area $k$, the privacy loss random variable of the mechanism $\Theta_i$ under $\tilde{\theta}_i \sim N(\mu_i^D, \sigma_i^2)$ satisfies
\begin{equation*}
\mathcal{L}(\tilde{\theta}_i) \sim N\left(\frac{(\mu_i^D - \mu_i^{D'})^2}{2\sigma_i^2}, \frac{(\mu_i^D - \mu_i^{D'})^2}{\sigma_i^2}\right).
\end{equation*}
\end{theoremA}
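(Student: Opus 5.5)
The plan is a direct computation of the log-likelihood ratio between two Gaussians with a common variance. The first step is to record that under $D$ and $D'$ the released draw has densities $N(\mu_i^D,\sigma_i^2)$ and $N(\mu_i^{D'},\sigma_i^2)$ with the \emph{same} variance. As noted at the start of this section, $\sigma_i^2$ depends only on $x_i$ (public) and the variance components (known), not on the $y$'s. This is what makes the normalizing constants cancel and turns the PLRV into an affine function of $\tilde\theta_i$.

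Write $\Delta\mu = \mu_i^D - \mu_i^{D'}$. Then
\[
\mathcal{L}(\tilde\theta_i) = \frac{(\tilde\theta_i-\mu_i^{D'})^2-(\tilde\theta_i-\mu_i^{D})^2}{2\sigma_i^2}
= \frac{\Delta\mu\,\bigl(2\tilde\theta_i-\mu_i^D-\mu_i^{D'}\bigr)}{2\sigma_i^2}.
\]
Substituting $\tilde\theta_i = \mu_i^D + \sigma_i Z$ with $Z\sim N(0,1)$, which is the distribution of the output under $D$, the PLRV becomes
\[
\frac{\Delta\mu\,(\Delta\mu + 2\sigma_i Z)}{2\sigma_i^2} = \frac{(\Delta\mu)^2}{2\sigma_i^2} + \frac{\Delta\mu}{\sigma_i}Z.
\]
An affine transform of a standard normal is normal, with mean $(\Delta\mu)^2/(2\sigma_i^2)$ and variance $(\Delta\mu)^2/\sigma_i^2$. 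This is exactly the claimed law.

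The only point needing care is the convention: the claimed positive mean requires evaluating the PLRV under $\tilde\theta_i\sim f_D$, which is the setting of the statement. Under $f_{D'}$ the same computation gives mean $-(\Delta\mu)^2/(2\sigma_i^2)$. No real obstacle remains beyond checking that $\sigma_i^2$ is genuinely data-independent. In the general-design case this follows because $B_i$ and $\sum_j(1-B_j)x_jx_j^T$ involve only the variance components and the covariates.
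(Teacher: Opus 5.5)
Your proposal is correct and follows essentially the same route as the paper's proof (Lemma~S1): write the log-likelihood ratio of two equal-variance Gaussians, observe that it is affine in $\tilde\theta_i$, and read off its mean and variance under $\tilde\theta_i\sim N(\mu_i^D,\sigma_i^2)$. Your substitution $\tilde\theta_i=\mu_i^D+\sigma_i Z$ is just a compact way of doing the paper's direct mean and variance computation. Your check that $\sigma_i^2$ does not depend on the data matches the paper's standing assumption.
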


Theorem 1 is a distributional statement about the privacy loss random variable for two equal-variance Gaussian distributions.

\begin{lemma}[Privacy loss for equal-variance Gaussian distributions]\label{lem:privacy-loss-gaussian}
Let $P_D = N(\mu^D, \sigma^2)$ and $P_{D'} = N(\mu^{D'}, \sigma^2)$. For a draw $\tilde{\theta} \sim P_D$, the privacy loss random variable $\mathcal{L}(\tilde{\theta}) = \log \{f_D(\tilde{\theta})/f_{D'}(\tilde{\theta})\}$ follows a Gaussian distribution:
\begin{equation}
\mathcal{L}(\tilde{\theta}) \sim N\left(\frac{(\mu^D - \mu^{D'})^2}{2\sigma^2}, \frac{(\mu^D - \mu^{D'})^2}{\sigma^2}\right).
\end{equation}
\end{lemma}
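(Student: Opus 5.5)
The plan is a direct computation: write the log density ratio explicitly, observe that it is an affine function of the draw $\tilde{\theta}$, and read off its mean and variance from the $P_D$ law of $\tilde{\theta}$. Set $\Delta = \mu^D - \mu^{D'}$.

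First, the two densities share the normalizing constant $(2\pi\sigma^2)^{-1/2}$ because the variances are equal. Hence
\begin{equation*}
\mathcal{L}(\tilde{\theta}) = \frac{(\tilde{\theta}-\mu^{D'})^2 - (\tilde{\theta}-\mu^D)^2}{2\sigma^2}.
\end{equation*}
Expanding the difference of squares gives $(\mu^D-\mu^{D'})(2\tilde{\theta}-\mu^D-\mu^{D'})$. This identity is the key algebraic step. It shows that
\begin{equation*}
\mathcal{L}(\tilde{\theta}) = \frac{\Delta}{\sigma^2}\,\tilde{\theta} - \frac{\Delta(\mu^D+\mu^{D'})}{2\sigma^2},
\end{equation*}
which is affine in $\tilde{\theta}$ and hence Gaussian under $\tilde{\theta}\sim N(\mu^D,\sigma^2)$.

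Second, it is cleaner to substitute $\tilde{\theta} = \mu^D + \sigma Z$ with $Z\sim N(0,1)$. Then $2\tilde{\theta}-\mu^D-\mu^{D'} = \Delta + 2\sigma Z$, so
\begin{equation*}
\mathcal{L}(\tilde{\theta}) = \frac{\Delta^2}{2\sigma^2} + \frac{\Delta}{\sigma}Z.
\end{equation*}
The mean $\Delta^2/(2\sigma^2)$ and variance $\Delta^2/\sigma^2$ can be read off directly. As a sanity check, the mean equals the KL divergence between the two Gaussians, consistent with the RDP/Gaussian mechanism facts cited in Section~\ref{sec:DP_prelim}.

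There is no genuine obstacle here. The only care needed is the degenerate case $\Delta=0$, where $\mathcal{L}\equiv 0$; this should be read as the point mass $N(0,0)$. Theorem~\ref{thm:expected-privacy-loss} then follows by applying the lemma with $\mu^D=\mu_i^D$, $\mu^{D'}=\mu_i^{D'}$ and $\sigma^2=\sigma_i^2$. This uses the observation stated at the start of the supplement that $\sigma_i^2$ does not depend on the data values, so it is the same under $D$ and $D'$.
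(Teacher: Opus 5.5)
Your proof is correct and follows the same route as the paper: write the log density ratio, observe that it is affine in $\tilde{\theta}$, and read off the mean and variance under $\tilde{\theta}\sim N(\mu^D,\sigma^2)$. Your substitution $\tilde{\theta}=\mu^D+\sigma Z$ and your remark on the degenerate case $\Delta=0$ are minor additions to the same computation.
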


\begin{proof}
The log-likelihood ratio is
\begin{align}
\mathcal{L}(\tilde{\theta}) &= \log \frac{\exp\left(-\frac{(\tilde{\theta} - \mu^D)^2}{2\sigma^2}\right)}{\exp\left(-\frac{(\tilde{\theta} - \mu^{D'})^2}{2\sigma^2}\right)}
= \frac{(\tilde{\theta} - \mu^{D'})^2 - (\tilde{\theta} - \mu^D)^2}{2\sigma^2}
= \frac{\tilde{\theta}(\mu^D - \mu^{D'})}{\sigma^2} - \frac{(\mu^D)^2 - (\mu^{D'})^2}{2\sigma^2}. \label{eq:privacy-loss-linear}
\end{align}
Since $\tilde{\theta} \sim N(\mu^D, \sigma^2)$ and \eqref{eq:privacy-loss-linear} is linear in $\tilde{\theta}$, $\mathcal{L}$ is Gaussian with
\begin{align}
\mathbb{E}[\mathcal{L}] &= \frac{\mu^D(\mu^D - \mu^{D'})}{\sigma^2} - \frac{(\mu^D)^2 - (\mu^{D'})^2}{2\sigma^2} = \frac{(\mu^D - \mu^{D'})^2}{2\sigma^2}, \\
\mathrm{Var}[\mathcal{L}] &= \frac{(\mu^D - \mu^{D'})^2}{\sigma^4} \cdot \sigma^2 = \frac{(\mu^D - \mu^{D'})^2}{\sigma^2}.
\end{align}
\end{proof}

Applying Lemma~\ref{lem:privacy-loss-gaussian} with $\mu^D = \mu_i^D$, $\mu^{D'} = \mu_i^{D'}$, and $\sigma^2 = \sigma_i^2$ gives the distribution stated in Theorem 1. The expected privacy loss is the mean of this distribution, namely the Kullback--Leibler divergence
\begin{equation}\label{eq:kl-gaussian}
D_{\mathrm{KL}}(P_D \| P_{D'}) = \mathbb{E}_{\tilde{\theta} \sim P_D}[\mathcal{L}(\tilde{\theta})] = \frac{(\mu_i^D - \mu_i^{D'})^2}{2\sigma_i^2}.
\end{equation}
Because $\mathcal{L}(\tilde{\theta}_i)$ is Gaussian, it has unbounded support, so no finite $\varepsilon$ can bound it almost surely; this is the impossibility of pure $\varepsilon$-DP discussed in Section 3.2 of the main paper.

\subsection{Proof of Theorem 2}\label{sec:proof-thm2}

\begin{theoremB}
Under the equal-variance assumption $\sigma_{y_i}^2 = \sigma_y^2$, for any two neighboring databases $D$ and $D'$ that differ in one record in area $k$, the mechanism $\Theta_i$ satisfies $(\alpha, \varepsilon_i^{(\alpha)})$-RDP for any $\alpha > 1$, with
\begin{equation*}
\varepsilon_i^{(\alpha)}\leq\frac{\alpha  [\max\{\Delta_{y_i};B\,S_y\} ]^2}{2\sigma_i^2}\leq\frac{\alpha S^2_{y}}{2\sigma_{i}^2},
\end{equation*}
where $\Delta_{y_i}=\max_{D \sim D'} |y_i^{(D)} - y_i^{(D')}|$ and $S_{y} =  \max_i\Delta_{y_i}$. Under bounded DP with fixed weights, $\Delta_{y_i}^{HT}= w_{\max,i}R_{y_i}/N_i$ and $\Delta_{y_i}^{HJ}= w_{\max,i}R_{y_i}/\sum_r w_{ir}$.
\end{theoremB}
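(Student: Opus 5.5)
The plan is to reduce the statement to the Rényi divergence between two Gaussians with equal variance, and then to bound the shift in the posterior mean using the hat matrix of the regression. As in the proof of Theorem 1, $\sigma_i^2$ depends only on $x_i$ and the variance components, so it is the same under $D$ and $D'$. For equal-variance Gaussians the Rényi divergence has the closed form
\begin{equation*}
D_\alpha\bigl(N(\mu_i^D,\sigma_i^2)\,\|\,N(\mu_i^{D'},\sigma_i^2)\bigr)=\frac{\alpha(\mu_i^D-\mu_i^{D'})^2}{2\sigma_i^2}.
\end{equation*}
This is the Gaussian-mechanism computation of \citet{mironov2017renyi}, or a direct completion of the square. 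It therefore suffices to show that $\sup_{D\sim D'}|\mu_i^D-\mu_i^{D'}|\le \max\{\Delta_{y_i};B\,S_y\}$, where the supremum also ranges over the area $k$ in which the record changes.

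\textbf{Step 1 (closed form for the shift).} Under equal sampling variances the weights $(1-B)$ in $\hat\beta_{WLS}$ cancel, so $\hat\beta_{WLS}=(X^TX)^{-1}X^Ty$, where $X$ has rows $x_j^T$ and is assumed to have full column rank. Write $\delta=y_k^{(D)}-y_k^{(D')}$; all other $y_j$ are unchanged. Then $\hat\beta^{(D)}-\hat\beta^{(D')}=(X^TX)^{-1}x_k\delta$. Since $\mu_i=(1-B)y_i+Bx_i^T\hat\beta$, this gives
\begin{equation*}
\mu_i^D-\mu_i^{D'}=\bigl[(1-B)\mathbf{1}\{i=k\}+B\,h_{ik}\bigr]\delta,
\end{equation*}
with $h_{ik}=x_i^T(X^TX)^{-1}x_k$ the entries of the hat matrix $H=X(X^TX)^{-1}X^T$.

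\textbf{Step 2 (bounding the coefficients).} This is the only nontrivial step. It rests on the fact that $H$ is a symmetric idempotent projection. Idempotence gives $h_{kk}=\sum_j h_{kj}^2\in[0,1]$, and Cauchy--Schwarz applied to the positive semidefinite $H$ gives $|h_{ik}|\le\sqrt{h_{ii}h_{kk}}\le1$.

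For $i=k$, the coefficient $1-B+Bh_{kk}$ is a convex combination of $1$ and $h_{kk}$, so it lies in $[0,1]$. Hence $|\mu_i^D-\mu_i^{D'}|\le|\delta|\le\Delta_{y_i}$.

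For $i\neq k$, we have $|\mu_i^D-\mu_i^{D'}|\le B|h_{ik}|\,\Delta_{y_k}\le B\,S_y$.

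Taking the maximum over $k$ yields $\max\{\Delta_{y_i};B\,S_y\}$. The second inequality in the statement then follows from $B\le1$ and $\Delta_{y_i}\le S_y$. Because the resulting bound is linear in $\alpha$, the same argument also gives the zCDP statement in Equation~(\ref{eq:czdp_guarantee}).

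\textbf{Step 3 (sensitivities of the direct estimators).} Under bounded DP with fixed weights, changing record $r$ in area $i$ changes only $Y_{ir}$, and by at most $R_{y_i}$. The numerator of both estimators therefore moves by at most $w_{ir}R_{y_i}\le w_{\max,i}R_{y_i}$. The denominators do not change: $N_i$ is a fixed population total, and $\sum_r w_{ir}$ is unchanged because the weights are fixed. Dividing by the respective denominator gives $\Delta^{HT}_{y_i}$ and $\Delta^{HJ}_{y_i}$. Both bounds are attained by placing the change on the record with the largest weight and moving its outcome between the two extremes of its range.

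I expect Step 2 to be the main obstacle. The delicate parts are verifying that the $(1-B)$ weighting cancels exactly under equal variances, and obtaining the $i\neq k$ bound with $|h_{ik}|\le1$ rather than a cruder bound that grows with $m$ or with the leverage.
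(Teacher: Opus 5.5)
Your proposal is correct and follows essentially the same route as the paper. Both arguments use the closed-form R\'enyi divergence between equal-variance Gaussians, the cancellation of $(1-B)$ in $\hat\beta_{WLS}$, the hat-matrix bound $|h_{ik}|\le\sqrt{h_{ii}h_{kk}}\le 1$ for the synthetic component, and the fixed-weight argument giving $\Delta^{HT}_{y_i}$ and $\Delta^{HJ}_{y_i}$.

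The only difference is cosmetic. You write the $i=k$ shift exactly as $(1-B+Bh_{kk})\delta$ and bound it by convexity, whereas the paper bounds the direct and synthetic parts separately with the triangle inequality. Both give the same bound $\Delta_{y_i}$.
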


We first record the R\'enyi divergence between two equal-variance Gaussians (Lemma~\ref{lem:renyi_normal}), then bound the sensitivity of the posterior mean, and combine the two.

\subsubsection{R\'enyi divergence for Gaussian distributions}

\begin{lemma}[R\'enyi divergence for equal-variance Gaussian distributions]\label{lem:renyi_normal}
For $\alpha > 1$,
\begin{equation}
    D_{\alpha}\big(N(\mu_1, \sigma^2)\,\|\,N(\mu_2, \sigma^2)\big) = \frac{\alpha(\mu_1 - \mu_2)^2}{2\sigma^2}.
\end{equation}
\end{lemma}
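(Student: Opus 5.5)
The plan is to compute the integral in the definition of the R\'enyi divergence directly, by completing the square in the exponent. Write the two densities as $p(x) = (2\pi\sigma^2)^{-1/2}\exp\{-(x-\mu_1)^2/(2\sigma^2)\}$ and $q(x) = (2\pi\sigma^2)^{-1/2}\exp\{-(x-\mu_2)^2/(2\sigma^2)\}$. The integrand $p(x)^\alpha q(x)^{1-\alpha}$ is then $(2\pi\sigma^2)^{-1/2}\exp\{-E(x)/(2\sigma^2)\}$ with
\[
E(x) = \alpha(x-\mu_1)^2 + (1-\alpha)(x-\mu_2)^2 .
\]
The normalizing constants combine cleanly because the exponents $\alpha$ and $1-\alpha$ sum to one.

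The key step is to rewrite $E(x)$. Since the coefficients of $x^2$ also sum to one, $E(x)$ is a monic quadratic in $x$. Completing the square gives
\[
E(x) = (x-\bar\mu)^2 + \alpha(1-\alpha)(\mu_1-\mu_2)^2, \qquad \bar\mu = \alpha\mu_1 + (1-\alpha)\mu_2 .
\]
I will verify this identity by expanding both sides. The constant term follows from the standard identity $\alpha a^2 + (1-\alpha) b^2 - (\alpha a + (1-\alpha) b)^2 = \alpha(1-\alpha)(a-b)^2$.

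Substituting this back, the Gaussian integral of $(2\pi\sigma^2)^{-1/2}\exp\{-(x-\bar\mu)^2/(2\sigma^2)\}$ equals one. This leaves
\[
\int p^\alpha q^{1-\alpha}\,dx = \exp\{\alpha(\alpha-1)(\mu_1-\mu_2)^2/(2\sigma^2)\}.
\]
The integral is finite for every $\alpha>1$ even though $\bar\mu$ lies outside the interval between $\mu_1$ and $\mu_2$, since the quadratic coefficient stays positive. Taking the logarithm and dividing by $\alpha-1$ gives $\alpha(\mu_1-\mu_2)^2/(2\sigma^2)$, as claimed.

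There is no real obstacle. The only points needing care are the sign in the constant term, where $-\alpha(1-\alpha) = \alpha(\alpha-1) > 0$, and noting that equal variances are what make the quadratic monic and the normalization trivial.
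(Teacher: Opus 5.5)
Your proposal is correct and is essentially the paper's proof: both complete the square in the exponent around $\bar\mu = \alpha\mu_1 + (1-\alpha)\mu_2$, integrate out the resulting Gaussian (which has unit mass), and read off the leftover constant $\alpha(\alpha-1)(\mu_1-\mu_2)^2/(2\sigma^2)$ before taking the logarithm and dividing by $\alpha-1$. Your explicit identity for the constant term is cleaner than the paper's intermediate display, where the factor $1/(2\sigma^2)$ is misplaced and multiplies only the first of the three terms.
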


\begin{proof}
Using the definition of R\'enyi divergence:
\begin{align*}
    D_{\alpha}(N(\mu_1, \sigma^2)\|N(\mu_2, \sigma^2)) &= \frac{1}{\alpha - 1} \log \int_{-\infty}^{\infty}\frac{1}{\sigma \sqrt{2\pi}} \exp \left(-\frac{\alpha}{2\sigma^2}(x - \mu_1)^2 - \frac{1 - \alpha}{2\sigma^2}(x - \mu_2)^2\right) dx \\
    &= \frac{1}{\alpha - 1} \log \frac{1}{\sigma\sqrt{2\pi}} \int_{-\infty}^{\infty} \exp \left( -\frac{1}{2\sigma^2}(x - \alpha \mu_1 - (1 - \alpha)\mu_2)^2\right) \times \\
    & \quad \quad  \quad \quad \quad \exp \left( \frac{1}{2\sigma^2}(\alpha\mu_1 + (1 - \alpha)\mu_2)^2 - \alpha\mu_1^2 - (1 - \alpha)\mu_2^2 \right) dx \\
    &= \frac{1}{\alpha - 1}\log \exp \left(\frac{(\mu_1^2 + \mu_2^2) \alpha (\alpha - 1) - 2\mu_1\mu_2\alpha (\alpha - 1)}{2 \sigma^2} \right) \\
    &= \frac{\alpha(\mu_1 - \mu_2)^2}{2\sigma^2}.
\end{align*}
\end{proof}

\subsubsection{Sensitivity of the posterior mean}\label{sec:Sensitivity_Posterior_Mean}

To apply Lemma~\ref{lem:renyi_normal} we bound the shift of the posterior mean $|\mu_i^D - \mu_i^{D'}|$. Under the equal-variance assumption,
\begin{equation}
\mu_i^D = \frac{\sigma_v^2 y_i^{(D)} + \sigma_y^2 x_i^T\hat{\beta}_{WLS}^{(D)}}{\sigma_v^2 + \sigma_y^2},
\quad
\mu_i^D - \mu_i^{D'} = \frac{\sigma_v^2 (y_i^{(D)} - y_i^{(D')}) + \sigma_y^2 \,(x_i^T\hat{\beta}_{WLS}^{(D)} - x_i^T\hat{\beta}_{WLS}^{(D')})}{\sigma_v^2 + \sigma_y^2}.
\end{equation}
We bound the two contributions separately and use the triangle inequality to obtain a conservative upper bound for the sum.

\paragraph{Direct component ($y_i^{(D)} - y_i^{(D')}$).}
The sensitivity of the direct estimate is $\Delta_{y_i}=\max_{D \sim D'}|y_i^{(D)} - y_i^{(D')}|$. The difference is nonzero only when $i=k$: substituting one record $y_{ir} \to y'_{ir}$ shifts the weighted total by $w_{ir}(y_{ir} - y'_{ir})$, which is largest when the record carrying the maximum weight moves across the full range of possible outcome values. For the Horvitz--Thompson estimator,
\[
\Delta_{y_i}^{HT} =\max_{D\sim D'}\left|\frac{\sum_r w_{ir}y_{ir}}{N_i}-\frac{\sum_r w_{ir}y'_{ir}}{N_i}\right|=\frac{\max_r w_{ir}\cdot R_{y_i}}{N_i}
=\frac{w_{i}^{max}R_{y_i}}{N_i},
\]
where $w_{i}^{max}=\max_r w_{ir}$ is the maximum weight in area $i$ and $R_{y_i}$ is the range of possible values of $y$ in area $i$ ($R_{y_i}=1$ for a binary outcome). Note that $R_{y_i}$ is the range of the outcome's domain, not the observed sample range: a neighboring dataset may replace a record with a value outside the observed range, and the observed range itself changes across $D \sim D'$. For the H\'ajek estimator,
\[
\Delta_{y_i}^{HJ} =\max_{D\sim D'}\left|\frac{\sum_r w_{ir}y_{ir}}{\sum_r w_{ir}}-\frac{\sum_r w_{ir}y'_{ir}}{\sum_r w_{ir}}\right|=\frac{w_{i}^{max}R_{y_i}}{\sum_r w_{ir}}.
\]

\paragraph{Synthetic component ($x_i^T\hat{\beta}_{WLS}^{(D)} - x_i^T\hat{\beta}_{WLS}^{(D')}$).}
\begin{lemma}[WLS sensitivity]\label{lem:wls-sensitivity}
With $x_i$ public and $D, D'$ differing in area $k$,
\begin{equation}
\left|x_i^T(\hat{\beta}_{WLS}^{(D)} - \hat{\beta}_{WLS}^{(D')})\right| \leq \Delta_{y_k}.
\end{equation}
\end{lemma}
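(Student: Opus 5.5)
Under the equal-variance assumption the WLS weights $1-B_j = 1-B$ are the same for every area, so the plan is to write $x_i^T\hat{\beta}_{WLS}$ as an entry of an ordinary least-squares hat matrix and bound that entry by one. Let $X$ be the $m\times p$ matrix with rows $x_j^T$. I assume it has full column rank, which is needed anyway for $\hat{\beta}_{WLS}$ to be defined. The common factor $1-B$ cancels between the inverse and the cross-product, so
\[
\hat{\beta}_{WLS} = (X^TX)^{-1}X^T y, \qquad y=(y_1,\ldots,y_m)^T .
\]
Since $D$ and $D'$ differ only in area $k$ and $X$ is public, $y^{(D)}-y^{(D')} = \delta\, e_k$ with $\delta = y_k^{(D)}-y_k^{(D')}$. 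Hence
\[
x_i^T(\hat{\beta}_{WLS}^{(D)}-\hat{\beta}_{WLS}^{(D')}) = \delta\, x_i^T (X^TX)^{-1} x_k = \delta\, h_{ik},
\]
where $H = X(X^TX)^{-1}X^T$ is the hat matrix.

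The key step is to show $|h_{ik}|\le 1$. The matrix $H$ is symmetric and idempotent, that is, an orthogonal projection. Therefore $h_{kk} = \|He_k\|^2 \le \|e_k\|^2 = 1$, and likewise $h_{ii}\le 1$. By Cauchy--Schwarz,
\[
|h_{ik}| = |\langle He_i, He_k\rangle| \le \sqrt{h_{ii}h_{kk}} \le 1 .
\]
Combining this with $|\delta|\le \Delta_{y_k}$, which is the definition of the direct-estimate sensitivity computed above, gives $|x_i^T(\hat{\beta}^{(D)}_{WLS}-\hat{\beta}^{(D')}_{WLS})|\le \Delta_{y_k}$.

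The only point needing care is the reduction from WLS to an unweighted projection, which depends on the weights being equal across areas. If the weights $1-B_j$ varied, I would instead use the weighted hat matrix $H_W = X(X^TWX)^{-1}X^TW$. Its entries are $h^W_{ik} = x_i^T(X^TWX)^{-1}x_k (1-B_k)$, and the same Cauchy--Schwarz argument applied to $W^{1/2}X$ bounds them by $\sqrt{(1-B_k)/(1-B_i)}$, which can exceed one. This is exactly why the lemma is stated under the equal-variance assumption. As a sanity check for the intercept-only case, $h_{ik}=1/m$, which agrees with Proposition~\ref{prop:intercept-varying} when all $B_j$ are equal.
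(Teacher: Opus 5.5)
Your proof is correct and follows the paper's argument. You cancel the common factor $1-B$, write the shift as $\delta\,h_{ik}$ with $H=X(X^TX)^{-1}X^T$, and bound $|h_{ik}|\le\sqrt{h_{ii}h_{kk}}\le 1$ by Cauchy--Schwarz and idempotence; the paper applies Cauchy--Schwarz to $u_i=(X^TX)^{-1/2}x_i$ rather than to $He_i$, which is the same inequality.
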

\begin{proof}
We have
\begin{align}
x_i^T(\hat{\beta}_{WLS}^{(D)} - \hat{\beta}_{WLS}^{(D')}) &= x_i^T\left[\sum_{j=1}^m (1-B)x_j x_j^T\right]^{-1} (1-B)x_k (y_k^{(D)} - y_k^{(D')}) \nonumber \\
&\leq \left|x_i^T\left[\sum_{j=1}^m x_j x_j^T\right]^{-1} x_k\right| \cdot \Delta_{y_k}. \label{eq:wls-sensitivity}
\end{align}
Let $u_i = (X^TX)^{-1/2}x_i$ and $u_k = (X^TX)^{-1/2}x_k$, so $x_i^T(X^TX)^{-1} x_k = u_i^T u_k$. By Cauchy--Schwarz,
\begin{equation}
|u_i^T u_k| \leq \|u_i\| \cdot \|u_k\| = \sqrt{x_i^T(X^TX)^{-1}x_i} \cdot \sqrt{x_k^T(X^TX)^{-1}x_k}.
\end{equation}
The terms $h_{ii} = x_i^T(X^TX)^{-1}x_i$ are diagonal entries of the hat matrix $H = X(X^TX)^{-1}X^T$, which is idempotent and symmetric, so $0 \leq h_{ii} \leq 1$~\citep{hoaglin1978hat}. Hence $\left|x_i^T(X^TX)^{-1} x_k\right| \leq \sqrt{h_{ii}h_{kk}} \leq 1$, and $|x_i^T(\hat{\beta}_{WLS}^{(D)} - \hat{\beta}_{WLS}^{(D')})| \leq \Delta_{y_k}$.
\end{proof}

\paragraph{Sensitivity of $\mu_i$.}
We distinguish two scenarios. If the area of interest is the one that changes between $D$ and $D'$ ($i = k$), both components contribute:
\begin{equation}
\left|\mu_k^D - \mu_k^{D'}\right| \leq \frac{\sigma_v^2 \, \Delta_{y_k} + \sigma_y^2 \, \Delta_{y_k}}{\sigma_v^2 + \sigma_y^2} =(1-B)\Delta_{y_k}+B\Delta_{y_k}= \Delta_{y_k}.
\end{equation}
For all other means $\mu_i$ $(i \neq k)$, only the synthetic component contributes:
\begin{equation}
\left|\mu_i^D - \mu_i^{D'}\right| = \frac{\sigma_y^2 \,|x_i^T(\hat{\beta}_{WLS}^{(D)} - \hat{\beta}_{WLS}^{(D')})|}{\sigma_v^2 + \sigma_y^2} \leq \frac{\sigma_y^2 \, \Delta_{y_k}}{\sigma_v^2 + \sigma_y^2} = B\, \Delta_{y_k}.
\end{equation}
The sensitivity is the maximum over both scenarios, noting that in the second the area that changes can be any of the other areas. With $S_y=\max_i\Delta_{y_i}$ and bounding $\max_{k\neq i}\Delta_{y_k}$ by $S_y$,
\begin{equation}\label{eq:sens_post_mean}
\Delta_{\mu_i}\leq\max\{\Delta_{y_i};B\,S_y\}\leq S_y.
\end{equation}

\paragraph{Conclusion.}
Since $\Theta_i(D) \sim N(\mu_i^D, \sigma_i^2)$ and $\Theta_i(D') \sim N(\mu_i^{D'}, \sigma_i^2)$ with the same variance, Lemma~\ref{lem:renyi_normal} gives $D_\alpha(\Theta_i(D)\,\|\,\Theta_i(D')) = \alpha(\mu_i^D - \mu_i^{D'})^2/(2\sigma_i^2) \leq \alpha \Delta_{\mu_i}^2/(2\sigma_i^2)$, and \eqref{eq:sens_post_mean} yields the bound in Theorem 2. The zCDP statement in Equation~(22) of the main paper follows because the bound is linear in $\alpha$.

\subsection{Proof of Proposition 1}\label{sec:proof-prop1}

\begin{propositionA}
Consider the intercept-only Fay--Herriot model with area-specific sampling variances $\sigma_{y_i}^2$, shrinkage factors $B_i = \sigma_{y_i}^2/(\sigma_{y_i}^2 + \sigma_v^2)$, and $W_+ = \sum_{j=1}^m(1-B_j)$. For any neighboring pair differing in one record in area $k$, $\mu_i^D - \mu_i^{D'} = c_{ik}\,(y_k^{(D)} - y_k^{(D')})$ with $c_{kk} = (1-B_k)(1 + B_k/W_+)$ and $c_{ik} = B_i(1-B_k)/W_+$ for $i \neq k$. Consequently (i) $\Theta_i$ satisfies $\rho_i$-zCDP with $\rho_i = [\max_{k} (c_{ik}\Delta_{y_k})]^2/(2\sigma_i^2) \leq S^2_y/(2\sigma^2_i)$, and (ii) $\Theta = (\Theta_1, \ldots, \Theta_m)$ satisfies $\rho_{\mathrm{joint}}$-zCDP with $\rho_{\mathrm{joint}} =  \max_{k} (\Delta_{y_k}^2/2) \sum_{i=1}^m c_{ik}^2/\sigma_i^2$. The coefficients satisfy $c_{kk} \leq 1$ and $c_{ik} \leq B_i$ for $i \neq k$.
\end{propositionA}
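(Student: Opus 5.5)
We will prove the proposition in three steps: first derive the coefficients $c_{ik}$ exactly, then bound them, and finally turn the mean shifts into zCDP guarantees using Lemma~\ref{lem:renyi_normal} and its multivariate analogue.

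\textbf{Step 1: the coefficients.} With $x_i = 1$ and area-specific variances, the posterior mean is $\mu_i^D = (1-B_i)y_i^{(D)} + B_i\hat\beta^{(D)}$. Here the WLS intercept is
\[
\hat\beta^{(D)} = W_+^{-1}\sum_j (1-B_j)y_j^{(D)}.
\]
The weights $1-B_j$ depend only on the variance components, which are public or invariant. Hence $\hat\beta$ is linear in the $y_j$. Changing one record in area $k$ alters only $y_k$, so
\[
\hat\beta^{(D)}-\hat\beta^{(D')} = \frac{1-B_k}{W_+}\,\bigl(y_k^{(D)}-y_k^{(D')}\bigr).
\]
Substituting this into $\mu_i^D - \mu_i^{D'}$ gives the claimed coefficients:
\begin{itemize}
\item for $i\neq k$, only the synthetic term moves, giving $c_{ik}=B_i(1-B_k)/W_+$;
\item for $i=k$, both terms move, giving $c_{kk}=(1-B_k)+B_k(1-B_k)/W_+$.
\end{itemize}
This is an identity, with no inequality used.

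\textbf{Step 2: the coefficient bounds.} Since $W_+ \ge 1-B_k$, we have $(1-B_k)/W_+\le 1$, so $c_{ik}\le B_i$ immediately. For $c_{kk}$, the same fact gives
\[
c_{kk}\le (1-B_k)+B_k = 1.
\]
These steps are routine. The only subtlety is to note that $W_+>0$ and that $W_+$ includes the $k$-th term.

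\textbf{Step 3(i): per-area guarantee.} The posterior variance $\sigma_i^2$ is the same under $D$ and $D'$, because it depends only on the variance components. Lemma~\ref{lem:renyi_normal} therefore gives
\[
D_\alpha = \frac{\alpha\, c_{ik}^2\,(y_k^{(D)}-y_k^{(D')})^2}{2\sigma_i^2}.
\]
We bound $|y_k^{(D)}-y_k^{(D')}|\le\Delta_{y_k}$, using the direct-estimate sensitivities from the proof of Theorem~\ref{thm:rdp}. We then maximize over the area $k$ where the change occurs. Since the result is linear in $\alpha$, this gives $\rho_i$-zCDP. The bound $\le S_y^2/(2\sigma_i^2)$ follows from $c_{ik}\le 1$ and $\Delta_{y_k}\le S_y$.

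\textbf{Step 3(ii): joint release.} This is the step that needs the most care. We must specify the joint law of $(\tilde\theta_1,\dots,\tilde\theta_m)$. I would take the mechanism to release independent draws from the marginal posteriors $N(\mu_i^D,\sigma_i^2)$, as in the definition of $\Theta_i$. Under that reading, the R\'enyi divergence of product measures is additive, so
\[
D_\alpha = \alpha\sum_i c_{ik}^2\,(y_k^{(D)}-y_k^{(D')})^2/(2\sigma_i^2).
\]
Bounding and maximizing over $k$ as in Step 3(i) yields $\rho_{\mathrm{joint}}$. The key point is that a single neighbor change affects only one $k$, so we take a maximum over $k$ rather than a sum over $i$ of per-area maxima.

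If one instead insists on the true joint posterior, the $\theta_i$ are correlated through $\beta$. The divergence would then be $\alpha\,\delta^T\Sigma^{-1}\delta/2$, with $\delta_i = c_{ik}(y_k^{(D)}-y_k^{(D')})$, and the stated formula would require the covariance to be diagonal. I would therefore state explicitly that the joint mechanism releases the independent marginal draws, so that the claimed expression is exact.
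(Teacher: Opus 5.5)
Your proposal is correct and follows the paper's proof essentially step for step: the linear shift $(1-B_k)(y_k^{(D)}-y_k^{(D')})/W_+$ of the WLS intercept, $W_+\ge 1-B_k$ for the coefficient bounds, the equal-variance Gaussian R\'enyi lemma for (i), and additivity over independent draws for (ii), which the paper likewise simply asserts; deriving the $c_{ik}$ directly as an identity is, if anything, more direct than the paper's triangle-inequality phrasing. Your caveat about the correlated joint posterior is unnecessary rather than a real restriction: with $\Sigma=\sigma_v^2\,\mathrm{diag}(B_1,\dots,B_m)+(\sigma_v^2/W_+)\,bb^T$ and $b=(B_1,\dots,B_m)^T$, one checks $\Sigma e_k\propto\delta$, hence $\Sigma^{-1}\delta\propto e_k$ and $\delta^T\Sigma^{-1}\delta=c_{kk}^2\,(y_k^{(D)}-y_k^{(D')})^2/\sigma_k^2$, which is exactly the $i=k$ term of the sum, so the stated $\rho_{\mathrm{joint}}$ remains a valid (conservative) bound for a single draw from the true joint posterior as well.
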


We assume throughout that the model does not include any area-level covariates beyond the intercept ($x_i = 1$ for all $i$), so the WLS estimator reduces to the scalar $\hat{\beta}_0^{Int} = \sum_j (1-B_j)\,y_j / W_+$, and the posterior mean for area $i$ is $\mu_i^{Int} = (1-B_i)\,y_i + B_i\,\hat{\beta}_0^{Int}$. The superscript \textit{Int} emphasizes that the results are specific to the random intercept model.

Since the posterior variance $\sigma_i^2$ depends only on the variance components $\{\sigma_{y_j}^2, \sigma_v^2\}$, which are assumed known, it does not change over neighboring datasets. Hence Lemma~\ref{lem:renyi_normal} still applies and we only need the shift of the posterior mean $\mu_i^{Int}$.

\subsubsection{Sensitivity of $\mu_i^{Int}$}

Similar to the proof strategy in Section~\ref{sec:Sensitivity_Posterior_Mean}, we derive upper bounds for the direct and the synthetic component separately and obtain a conservative bound for the sensitivity by using the triangle inequality.

\paragraph{Direct component ($y_i^{(D)} - y_i^{(D')}$).}
The sensitivity of the direct estimate remains $\Delta_{y_i}=\max_{D \sim D'}|y_i^{(D)} - y_i^{(D')}|$.

\paragraph{Synthetic component ($\hat{\beta}_0^{Int(D)} - \hat{\beta}_0^{Int(D')}$).}
For the intercept-only model and area specific variances, changing one record in area $k$ shifts the WLS intercept by
\begin{equation}
\left|\hat{\beta}_0^{Int(D)} - \hat{\beta}_0^{Int(D')}\right| = \frac{(1-B_k)}{W_+}\,\left|y_k^{(D)} - y_k^{(D')}\right| \;\leq\; \frac{(1-B_k)}{W_+}\,\Delta_{y_k}.
\end{equation}

\paragraph{Sensitivity for $\mu_i^{Int}$.}
To compute the sensitivity for the posterior mean $\mu_i^{Int}$, we again need to distinguish two scenarios. Assuming the area of interest is the one that changes between $D$ and $D'$ (i.e. $\mu_i^{Int}=\mu_k^{Int}$), both components contribute:
\begin{equation}
\left|\mu_k^{Int(D)} - \mu_k^{Int(D')}\right| \leq \underbrace{(1-B_k)\,\Delta_{y_k}}_{\text{direct}} + \underbrace{B_k\,\frac{(1-B_k)}{W_+}\,\Delta_{y_k}}_{\text{synthetic}} = (1-B_k)\!\left(1 + \frac{B_k}{W_+}\right)\Delta_{y_k} = c_{kk}\,\Delta_{y_k}.
\end{equation}
For all other small area estimates $\mu_i^{Int}$ $(i \neq k)$, only the synthetic component contributes.
\begin{equation}
\left|\mu_i^{Int(D)} - \mu_i^{Int(D')}\right| \leq B_i\,\frac{(1-B_k)}{W_+}\,\Delta_{y_k} = c_{ik}\,\Delta_{y_k}.
\end{equation}
To obtain the overall sensitivity, we need to find the maximum over both scenarios, noting that for the second case the area that changes can be any of the other areas. Thus, the sensitivity is given as
\begin{equation}\label{eq:sens_intercept-rw}
\Delta_{\mu_i^{Int}}=\max_{1 \leq k \leq m} c_{ik}\,\Delta_{y_k}.
\end{equation}

\subsubsection{Relationship to the results for the general model with constant variance}
We can compare the results for the intercept-only model with area specific variances to the results for the model with covariates but fixed variance from Section~\ref{sec:Sensitivity_Posterior_Mean}.

\medskip
\noindent\textbf{Coefficient bounds.}
Since every summand of $W_+$ is positive, $W_+ \geq (1-B_k)$, hence $(1-B_k)/W_+ \leq 1$ and
\begin{equation}
c_{kk} = (1-B_k) + \frac{B_k(1-B_k)}{W_+} \leq (1-B_k) + B_k = 1, \qquad c_{ik} = \frac{B_i(1-B_k)}{W_+} \leq B_i.
\end{equation}
Let $S_y=\max_i\Delta_{y_i}$. A conservative upper bound for the sensitivity is given as
\[
\Delta_{\mu_i^{Int}}=\max\{\Delta_{y_i};\, B_i S_y\}\leq S_y,
\]
i.e., the conservative upper bound is comparable to the fixed variance results for $\Delta_{\mu_i}$ from Eq.~(\ref{eq:sens_post_mean}) with the only difference that the constant term $BS_y$ is replaced by $B_i S_y$ to account for the fact that the shrinkage factor $B_i$ varies between the areas.

\subsubsection{Privacy guarantees for the intercept-only model}
\medskip
\noindent\textbf{Per-area privacy guarantee (claim (i)).}
By Lemma~\ref{lem:renyi_normal}, Eq.~(\ref{eq:sens_intercept-rw}) implies that under $\rho$-zCDP the privacy guarantee for a random draw of $\Theta_i^{Int}$ is given by
\begin{equation}
\rho_i^{Int}= \frac{\{\max_{1 \leq k \leq m}(c_{ik}\Delta_{y_k})\}^2}{2\sigma_i^2}\leq\frac{ S_y^2}{2\sigma_i^2}.
\end{equation}

\medskip
\noindent\textbf{Joint release (claim (ii)).}
Using the composition properties of $\rho$-zCDP, a conservative upper bound for the privacy loss of the simultaneous release of all $m$ small area estimates would be $\rho_{joint}^{Int}=\sum_i \rho_i^{Int}.$ However, a tighter bound can be obtained by directly computing the maximum of the R\'enyi divergence for the joint release. The $m$ posterior draws are independent, so the R\'enyi divergence is additive over the product measure. For a fixed neighboring pair changing one record in area $k$, write $\Delta_k = y_k^{(D)} - y_k^{(D')}$ for the realized shift of the direct estimate. Then
\begin{equation}
D_\alpha\!\left(\Theta(D)\,\|\,\Theta(D')\right) \leq \sum_{i=1}^m \frac{\alpha\,c_{ik}^2\,\Delta_k^2}{2\sigma_i^2} = \alpha\,\Delta_k^2 \sum_{i=1}^m \frac{c_{ik}^2}{2\sigma_i^2}.
\end{equation}
Bounding $|\Delta_k| \leq \Delta_{y_k}$ and maximizing over area $k$ yields
\begin{equation}
\rho_{\mathrm{joint}}^{Int} = \max_{1 \leq k \leq m} \frac{\Delta_{y_k}^2}{2}\sum_{i=1}^m \frac{c_{ik}^2}{\sigma_i^2},
\end{equation}
which is claim (ii). The bound is a maximum over $k$ rather than a sum because, under bounded DP, a neighboring pair changes one record and therefore one area; sequential composition, which bounds the loss of $m$ separate mechanisms each of which may be affected by the change, is the special case obtained by replacing every $c_{ik}$ by its maximum over $k$ and summing. Splitting the sum into its $i=k$ term and the remainder gives the decomposition in Equation~(26) of the main paper.

\section{Additional Results for the Applications}\label{sec:extra}

\subsection{Exact versus conservative per-area guarantees}

Table~\ref{tab:cons} compares the distribution of the exact per-area guarantee $\rho_i = c_{ii}^2 \Delta_{y_i}^2/(2\sigma_i^2)$ of Proposition 1(i) with the conservative bound $\Delta_{y_i}^2/(2\sigma_i^2)$ obtained from $c_{kk} \leq 1$. In both applications the maximum in Proposition 1(i) is attained at $k = i$ for every area; the largest value of $\max_{k \neq i} c_{ik}\Delta_{y_k}/(c_{ii}\Delta_{y_i})$ over areas is $1.5 \times 10^{-3}$ in the ACS and $0.57$ in the BRFSS. Figure~\ref{fig:tightness} plots the ratio $c_{ii}^2$ against the shrinkage factor $B_i$; the ratio is determined by $B_i$ up to the small correction $1 + B_i/W_+$, and it approaches $(1-B_i)^2$ as the number of areas grows.

\begin{table}[htbp]
\centering
\caption{Distribution across areas of the exact per-area guarantee $\rho_i$ and of the conservative bound $\Delta_{y_i}^2/(2\sigma_i^2)$, with $\varepsilon$ at $\delta = 10^{-7}$ (ACS) and $\delta = 10^{-5}$ (BRFSS).}
\label{tab:cons}
\begin{tabular}{lcccccccc}
\toprule
 & \multicolumn{4}{c}{ACS (poverty)} & \multicolumn{4}{c}{BRFSS (smoking)} \\
\cmidrule(lr){2-5}\cmidrule(lr){6-9}
 & \multicolumn{2}{c}{exact} & \multicolumn{2}{c}{conservative} & \multicolumn{2}{c}{exact} & \multicolumn{2}{c}{conservative} \\
 & $\rho_i$ & $\varepsilon_i$ & $\rho_i$ & $\varepsilon_i$ & $\rho_i$ & $\varepsilon_i$ & $\rho_i$ & $\varepsilon_i$ \\
\midrule
Minimum & 0.003 & 0.43 & 0.005 & 0.56 & 0.040 & 1.40 & 0.052 & 1.60 \\
10th percentile & 0.017 & 1.05 & 0.025 & 1.29 & 0.050 & 1.56 & 0.188 & 3.13 \\
Median & 0.048 & 1.81 & 0.061 & 2.05 & 0.188 & 3.13 & 3.08 & 15.0 \\
90th percentile & 0.163 & 3.40 & 0.183 & 3.61 & 2.49 & 13.2 & 9.34 & 30.1 \\
Maximum & 1.51 & 11.4 & 1.55 & 11.5 & 17.4 & 45.6 & 72.1 & 129.7 \\
\midrule
Median of ratio conservative/exact ($1/c_{ii}^2$) & \multicolumn{4}{c}{1.24} & \multicolumn{4}{c}{5.13} \\
\bottomrule
\end{tabular}
\end{table}

\begin{figure}[htbp]
\centering
\includegraphics[width=0.7\textwidth]{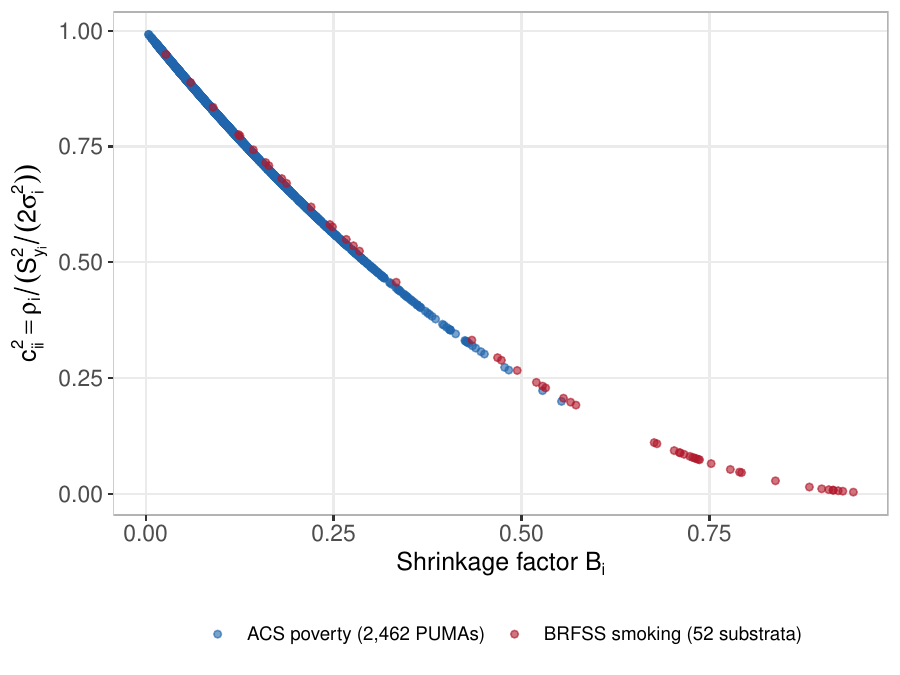}
\caption{Tightening factor $c_{ii}^2$ of the exact per-area guarantee relative to the conservative bound, against the shrinkage factor $B_i$, for all areas of both applications.}
\label{fig:tightness}
\end{figure}

\subsection{Weight inequality}

Figure~\ref{fig:wratio} plots $\rho_i$ against the weight inequality ratio $w_i^{max}/\bar{w}_i$. The correlation of $\log \rho_i$ with $\log(w_i^{max}/\bar{w}_i)$ is 0.50 in the ACS and 0.25 in the BRFSS, where sample size varies over a wider range.

\begin{figure}[htbp]
\centering
\includegraphics[width=\textwidth]{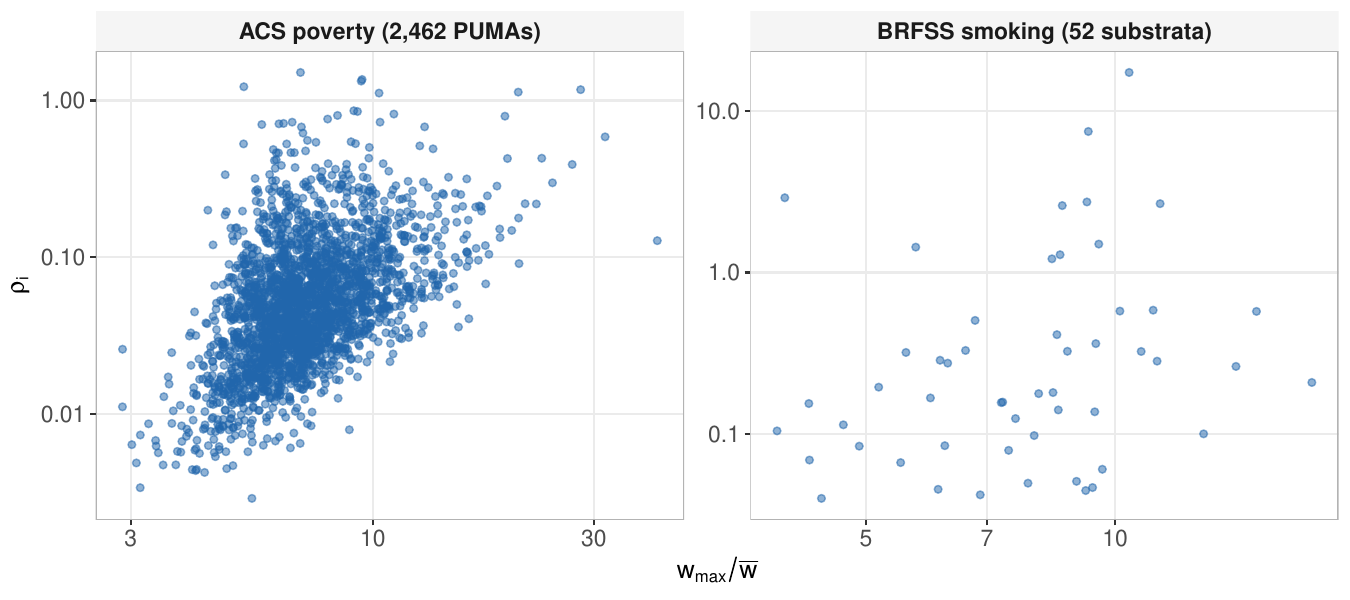}
\caption{Per-area guarantee $\rho_i$ versus the weight inequality ratio $w_i^{max}/\bar{w}_i$ (both on the log scale).}
\label{fig:wratio}
\end{figure}

\subsection{Dependence of $\varepsilon$ on $\delta$}

Figure~\ref{fig:epsdelta} shows the $(\varepsilon,\delta)$-DP guarantee implied by the $\delta$-free $\rho$ values via $\varepsilon = \rho + 2\sqrt{\rho\log(1/\delta)}$, for the per-area median and the joint release of each application. Since $\rho$ does not depend on $\delta$, the vertical gap between the two applications at any $\delta$ reflects only the difference in $\rho$. The points mark each application's $\delta < 1/N$.

\begin{figure}[htbp]
\centering
\includegraphics[width=0.75\textwidth]{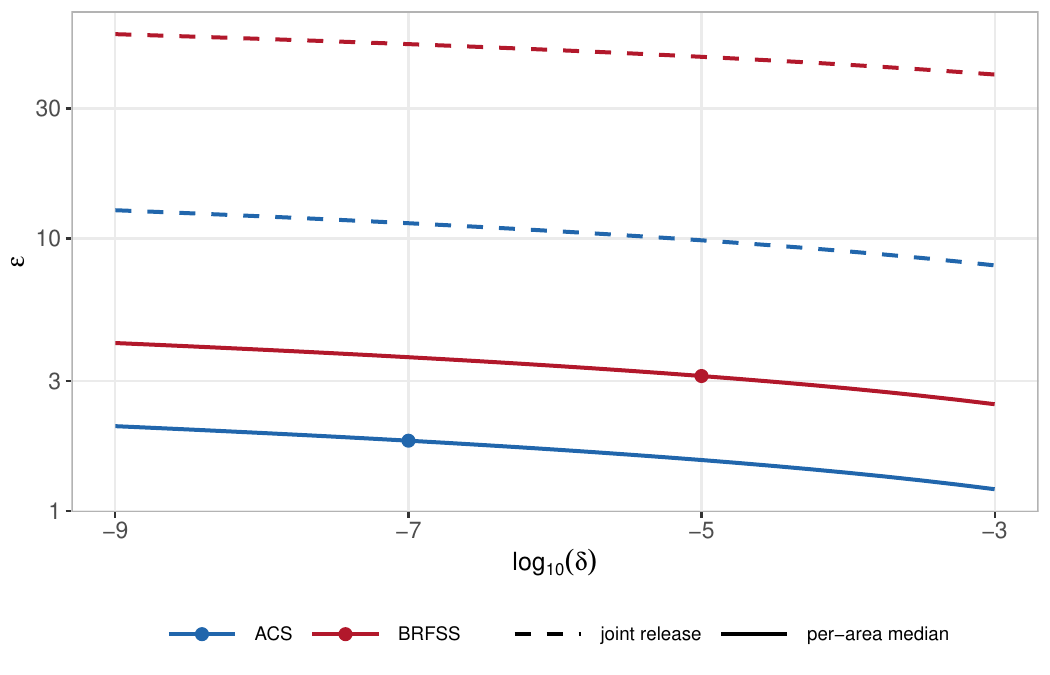}
\caption{$\varepsilon$ as a function of $\delta$ at the per-area median $\rho$ (solid) and at $\rho_{\mathrm{joint}}$ (dashed). Points mark $\delta = 10^{-7}$ (ACS) and $\delta = 10^{-5}$ (BRFSS).}
\label{fig:epsdelta}
\end{figure}

\subsection{Sensitivity to the estimated model variance}

Table~\ref{tab:rob} gives the quantiles across areas of the relative change in $\rho_i$ when $\sigma_v^2$ is replaced by $\hat{\sigma}_v^2(1+\eta)$, underlying Figure~2 of the main paper.

\begin{table}[htbp]
\centering
\caption{Relative change in the per-area guarantee $\rho_i$ (in \%) under misspecification $\hat{\sigma}_v^2(1+\eta)$ of the model variance: quantiles across areas.}
\label{tab:rob}
\begin{tabular}{lrrrrrrrrrr}
\toprule
 & \multicolumn{5}{c}{ACS (poverty)} & \multicolumn{5}{c}{BRFSS (smoking)} \\
\cmidrule(lr){2-6}\cmidrule(lr){7-11}
$\eta$ & min & 10\% & median & 90\% & max & min & 10\% & median & 90\% & max \\
\midrule
$-0.50$ & $-35.6$ & $-18.2$ & $-9.2$ & $-3.8$ & $-0.4$ & $-47.8$ & $-46.8$ & $-35.2$ & $-12.0$ & $-2.4$ \\
$-0.25$ & $-15.6$ & $-6.9$ & $-3.3$ & $-1.3$ & $-0.1$ & $-23.5$ & $-22.8$ & $-15.4$ & $-4.4$ & $-0.8$ \\
$-0.10$ & $-5.8$ & $-2.4$ & $-1.1$ & $-0.4$ & $-0.0$ & $-9.3$ & $-9.0$ & $-5.8$ & $-1.5$ & $-0.3$ \\
$+0.10$ & $0.0$ & $0.4$ & $0.9$ & $2.1$ & $5.3$ & $0.2$ & $1.3$ & $5.3$ & $8.8$ & $9.2$ \\
$+0.25$ & $0.1$ & $0.8$ & $2.1$ & $4.6$ & $12.4$ & $0.5$ & $2.8$ & $12.4$ & $21.7$ & $22.8$ \\
$+0.50$ & $0.1$ & $1.3$ & $3.5$ & $8.0$ & $22.6$ & $0.9$ & $4.8$ & $22.6$ & $42.5$ & $44.9$ \\
\bottomrule
\end{tabular}
\end{table}

\bibliographystyle{apalike}
\bibliography{sae_dp}

\end{document}